\providecommand{\U}[1]{\protect\rule{.1in}{.1in}}
\newenvironment{proof}[1][Proof]{\noindent\textbf{#1.} }{\ \rule{0.5em}{0.5em}}
\newtheorem{theorem}{Theorem}
\newtheorem{lemma}[theorem]{Lemma}
\newtheorem{proposition}[theorem]{Proposition}
\begin{document}

\title{Advice Coins for Classical and Quantum Computation}
\author{Scott Aaronson\thanks{MIT. \ Email: aaronson@csail.mit.edu. \ \ This material
is based upon work supported by the National Science Foundation under Grant
No. 0844626. \ Also supported by a DARPA YFA grant and a Sloan Fellowship.}
\and Andrew Drucker\thanks{MIT. \ Email: adrucker@mit.edu. \ Supported by a DARPA
YFA grant.}}
\date{}
\maketitle

\begin{abstract}
We study the power of classical and quantum algorithms equipped with
nonuniform advice, in the form of a coin whose bias encodes useful
information. This question takes on particular importance in the quantum case,
due to a surprising result that we prove: \textit{a quantum finite automaton
with just two states can be sensitive to arbitrarily small changes in a coin's
bias}. \ This contrasts with classical probabilistic finite automata, whose
sensitivity to changes in a coin's bias is bounded by a classic 1970 result of
Hellman and Cover.

Despite this finding, we are able to bound the power of advice coins for
space-bounded classical and quantum computation. \ We define the classes
$\mathsf{BPPSPACE/coin}$ and $\mathsf{BQPSPACE/coin}$, of languages decidable
by classical and quantum polynomial-space machines with advice coins. \ Our
main theorem is that both classes coincide with $\mathsf{PSPACE/poly}$.
\ Proving this result turns out to require substantial machinery. \ We use an
algorithm due to Neff for finding roots of polynomials in $\mathsf{NC}$; a
result from algebraic geometry that lower-bounds the separation of a
polynomial's roots; and a result on fixed-points of superoperators due to
Aaronson and Watrous, originally proved in the context of quantum computing
with closed timelike curves.

\end{abstract}

\section{Introduction\label{INTRO}}

\subsection{The Distinguishing Problem}

The fundamental task of mathematical statistics is to learn features of a
random process from empirical data generated by that process. \ One of the
simplest, yet most important, examples concerns a coin with unknown bias.
\ Say we are given a coin which lands \textquotedblleft
heads\textquotedblright\ with some unknown probability $q$ (called the
\textit{bias}). \ In the \textit{distinguishing problem}, we assume $q$ is
equal either to $p$ or to $p+\varepsilon$, for some known $p,\varepsilon$, and
we want to decide which holds.

A traditional focus is the \textit{sample complexity} of statistical learning
procedures. \ For example, if $p=1/2$, then $t=\Theta\left(  \log\left(
1/\delta\right)  /\varepsilon^{2}\right)  $ coin flips are necessary and
sufficient to succeed with probability $1-\delta$ on the distinguishing
problem above. \ This assumes, however, that we are able to count the number
of heads seen, which requires $\log(t)$ bits of memory. From the perspective
of computational efficiency, it is natural to wonder whether methods with a
much smaller space requirement are possible. \ This question was studied in a
classic 1970 paper by Hellman and Cover~\cite{hc}. \ They showed that any
(classical, probabilistic) finite automaton that distinguishes with bounded
error between a coin of bias $p$ and a coin of bias $p+\varepsilon$, must have
$\Omega\left(  p\left(  1-p\right)  /\varepsilon\right)  $
states.\footnote{For a formal statement, see Section~\ref{HCSEC}.} \ Their
result holds with \textit{no restriction} on the number of coin flips
performed by the automaton. \ This makes the result especially interesting, as
it is not immediately clear how sensitive such machines can be to small
changes in the bias.

Several variations of the distinguishing problem for space-bounded automata
were studied in related works by Hellman~\cite{hellman_thesis} and
Cover~\cite{cover}. \ Very recently, Braverman, Rao, Raz, and
Yehudayoff~\cite{brry} and Brody and Verbin~\cite{bv:coin} studied the power
of restricted-width, \textit{read-once branching programs} for this problem.
\ The distinguishing problem is also closely related to the
\textit{approximate majority}\ problem, in which given an $n$-bit string $x$,
we want to decide whether $x$ has Hamming weight less than $\left(
1/2-\varepsilon\right)  n$ or more than $\left(  1/2+\varepsilon\right)  n$.
\ A large body of research has addressed the ability of constant-depth
circuits to solve the approximate majority problem and its
variants~\cite{aar:ph, ajtai83, amano, ow, viola:maj, viola10}.

\subsection{The Quantum Case}

In this paper, our first contribution is to investigate the power of
\textit{quantum} space-bounded algorithms to solve the distinguishing problem.
\ We prove the surprising result that, in the absence of noise, quantum finite
automata with a constant number of states can be sensitive to
\textit{arbitrarily small} changes in bias:

\begin{theorem}
[Informal]\label{qmnhc} For any $p\in\left[  0,1\right]  $ and $\varepsilon
>0$, there is a quantum finite automaton $M_{p,\varepsilon}$ with just two
states (not counting the $\left\vert \operatorname*{Accept}\right\rangle $ and
$\left\vert \operatorname*{Reject}\right\rangle $ states) that distinguishes a
coin of bias $p$ from a coin of bias $p+\varepsilon$; the difference in
acceptance probabilities between the two cases is at least $0.01$. \ (This
difference can be amplified using more states.)
\end{theorem}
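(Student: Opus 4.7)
The plan is to exploit a qubit's ability to accumulate small phases coherently, in sharp contrast to the Hellman--Cover setting where every step forces a stochastic transition. I would use the two non-halting basis states as $\ket{0}$ and $\ket{1}$, start in $\ket{+}=(\ket{0}+\ket{1})/\sqrt{2}$, and apply on each flip the diagonal unitary $\operatorname{diag}(1,e^{i\alpha_H})$ for heads or $\operatorname{diag}(1,e^{i\alpha_T})$ for tails. Because these commute, after a prescribed $T$ flips the state is the pure state $(\ket{0}+e^{i\Theta}\ket{1})/\sqrt{2}$, where $\Theta = N_H\alpha_H+(T-N_H)\alpha_T$ depends only on the random count $N_H$ of heads. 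A final Hadamard-basis measurement (mapping $\ket{+}$ to Accept, $\ket{-}$ to Reject) then yields acceptance probability $\tfrac12+\tfrac12\,\bE[\cos\Theta]$.

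The angles and $T$ I would pick so that the expected phase $\mu_q = T(q\alpha_H+(1-q)\alpha_T)$ equals $0$ when $q=p$ and $\pi$ when $q=p+\varepsilon$; this forces $\alpha_H-\alpha_T = \pi/(T\varepsilon)$ and $\alpha_T = -p\pi/(T\varepsilon)$. The variance of $\Theta$ is then $q(1-q)\pi^2/(T\varepsilon^2)$, which vanishes as $T\to\infty$. Concretely, the binomial characteristic function gives
\[
\bE[e^{i\Theta}] \;=\; e^{iT\alpha_T}\bigl(1-q+qe^{i\beta}\bigr)^T \;\approx\; e^{i\mu_q}\,e^{-q(1-q)\pi^2/(2T\varepsilon^2)}
\]
with $\beta=\alpha_H-\alpha_T=\pi/(T\varepsilon)$, by expanding $\log(1-q+qe^{i\beta})=iq\beta-q(1-q)\beta^2/2+O(\beta^3)$. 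Taking $T=C/\varepsilon^2$ for a sufficiently large absolute constant $C$ makes the real part close to $+1$ when $q=p$ and close to $-1$ when $q=p+\varepsilon$, so $\Pr_p[\text{Accept}]$ is close to $1$ while $\Pr_{p+\varepsilon}[\text{Accept}]$ is close to $0$, far exceeding the promised $0.01$ gap.

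The main technical point is simply to control the $O(\beta^3)$ phase correction and the exponential variance factor above; with $T\beta^3 = \pi^3/(T^2\varepsilon^3)\to 0$, both are negligible. No deep obstacle arises: the conceptual heart of the proof is that, whereas a classical automaton needs to retain at least a coarse estimate of $N_H$ to react to a change in bias, the quantum automaton never needs to know $N_H$ at all, since the only thing that matters is the accumulated phase, which a two-dimensional Hilbert space can record losslessly even as arbitrarily many flips are read.
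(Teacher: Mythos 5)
Your construction has the right conceptual core---using the qubit as a lossless analog accumulator of a small per-flip signal, which is exactly the idea behind the paper's proof---but as written it is not a two-state finite automaton, and this is a genuine gap rather than a cosmetic one. You measure ``after a prescribed $T$ flips'' with $T=C/\varepsilon^{2}$, and your rotation angles are themselves calibrated to $T$. In the model of the theorem (Section 2.3 of the paper), the automaton is specified by two fixed superoperators $\mathcal{E}_{0},\mathcal{E}_{1}$ applied at every step; it has no clock and cannot count to $T$ unless you either add $\Omega(1/\varepsilon^{2})$ classical counter states (defeating the whole point, since Hellman--Cover is tight at that size) or make the transition rule time-dependent (a strictly different model, treated separately in Section 5 of the paper). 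So the step you dismiss as unproblematic---``a final Hadamard-basis measurement'' at time exactly $T$---is precisely what a two-state machine cannot do.

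The paper's proof supplies the missing idea: instead of a deterministic stopping time, the halting is built into the superoperator itself---at each step, with probability $\alpha=\varepsilon^{2}/B$ the counter subspace is measured and mapped to Accept/Reject, otherwise nothing happens. The stopping time is then geometric with mean $\Theta(1/\varepsilon^{2})$, so the automaton gets the right time scale ``for free'' with no counting. But this repair changes the analysis in a way your argument does not survive: with a geometric stopping time the accumulated phase/angle at the moment of measurement is smeared over an exponential distribution of durations, so your characteristic-function computation no longer yields $\bE[\cos\Theta]$ near $\pm 1$, and one cannot hope for acceptance probabilities near $1$ versus near $0$. The paper instead settles for a small constant gap ($a(p)\le 0.0008$ versus $a(p+\varepsilon)\ge 0.0125$, giving $0.0117$), obtained by a Chebyshev bound on the random-walk angle together with a careful weighting over the geometric halting time, and then amplifies with more states. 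In short: your ``main technical point'' (controlling $O(\beta^{3})$ corrections) is not where the difficulty lies; the difficulty is implementing the stopping rule inside a constant-size, time-independent automaton and redoing the estimate for a random measurement time, which is the substance of the paper's argument.
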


In other words, the lower bound of Hellman and Cover \cite{hc}\ has no
analogue for quantum finite automata. \ The upshot is that we obtain a natural
example of a task that a quantum finite automaton can solve using
\textit{arbitrarily} fewer states than a probabilistic finite automaton, not
merely exponentially fewer states! \ Galvao and Hardy~\cite{galvao} gave a
related example, involving an automaton that moves continuously through a
field $\varphi$, and needs to decide whether an integral $\int_{0}^{1}%
\varphi\left(  x\right)  dx$ is odd or even, promised that it is an integer.
\ Here, a quantum automaton needs only a single qubit, whereas a classical
automaton cannot guarantee success with any finite number of bits.
\ Naturally, both our quantum automaton and that of Galvao and Hardy only work
in the absence of noise.

\subsection{Coins as Advice}

This unexpected power of quantum finite automata invites us to think further
about what sorts of statistical learning are possible using a small number of
qubits. \ In particular, if space-bounded quantum algorithms can detect
arbitrarily small changes in a coin's bias, then could a $p$-biased coin be an
incredibly-powerful \textit{information resource} for quantum computation, if
the bias $p$ was well-chosen? \ A bias $p\in(0,1)$ can be viewed in its binary
expansion $p=0.p_{1}p_{2}\ldots$\ as an infinite sequence of bits; by flipping
a $p$-biased coin, we could hope to access those bits, perhaps to help us
perform computations. \

This idea can be seen in \textquotedblleft Buffon's needle,\textquotedblright%
\ a probabilistic experiment that in principle allows one to calculate the
digits of $\pi$ to any desired accuracy.\footnote{See
http://en.wikipedia.org/wiki/Buffon\%27s\_needle} \ It can also be seen in the
old speculation that computationally-useful information might somehow be
encoded in dimensionless physical constants, such as the fine-structure
constant $\alpha\approx0.0072973525377$ that characterizes the strength of the
electromagnetic interaction. \ But leaving aside the question of which biases
$p\in\lbrack0,1]$ can be realized by actual physical processes, let us assume
that coins of \textit{any} desired bias are available. \ We can then ask: what
computational problems can be solved efficiently using such coins? \ This
question was raised to us by Erik Demaine (personal communication), and was
initially motivated by a problem in computational genetics. \

In the model that we use, a Turing machine receives an input $x$ and is given
access to a sequence of bits drawn independently from an \textit{advice coin}
with some arbitrary bias $p_{n}\in\lbrack0,1]$, which may depend on the input
length $n=\left\vert x\right\vert $. \ The machine is supposed to decide (with
high success probability) whether $x$ is in some language $L$. \ We allow
$p_{n}$ to depend only on $\left\vert x\right\vert $, not on $x$ itself, since
otherwise the bias could be set to $0$ or $1$ depending on whether $x\in L$,
allowing membership in $L$ to be decided trivially. \ We let
$\mathsf{BPPSPACE/coin}$ be the class of languages decidable with bounded
error by polynomial-space algorithms with an advice coin. \ Similarly,
$\mathsf{BQPSPACE/coin}$ is the corresponding class for polynomial-space
quantum algorithms. \ We impose no bound on the running time of these algorithms.

It is natural to compare these classes with the corresponding classes
$\mathsf{BPPSPACE/poly}$\ and $\mathsf{BQPSPACE/poly}$, which consist of all
languages decidable by $\mathsf{BPPSPACE}$\ and $\mathsf{BQPSPACE}$\ machines
respectively, with the help of an arbitrary \textit{advice string} $w_{n}%
\in\{0,1\}^{\ast}$ that can depend only on the input length $n=\left\vert
x\right\vert $. \ Compared to the standard advice classes, the strength of the
coin model is that an advice coin bias $p_{n}$ can be an arbitrary real
number, and so encode infinitely many bits; the weakness is that this
information is only accessible indirectly through the observed outcomes of
coin flips.

It is tempting to try to simulate an advice coin using a conventional advice
string, which simply specifies the coin's bias to $\operatorname*{poly}\left(
n\right)  $ bits of precision. \ At least in the classical case, the effect of
\textquotedblleft rounding\textquotedblright\ the bias can then be bounded by
the Hellman-Cover Theorem. \ Unfortunately, that theorem (whose bound is
essentially tight) is not strong enough to make this work: if the bias $p$ is
extremely close to $0$ or $1$, then a $\mathsf{PSPACE}$\ machine really
\textit{can} detect changes in $p$\ much smaller than
$2^{-\operatorname*{poly}\left(  n\right)  }$. \ This means that
upper-bounding the power of advice coins is a nontrivial problem even in the
classical case. \ In the quantum case, the situation is even worse, since as
mentioned earlier, the quantum analogue of the Hellman-Cover Theorem is false.

Despite these difficulties, we are able to show strong limits on the power of
advice coins in both the classical and quantum cases. \ Our main theorem says
that $\mathsf{PSPACE}$\ machines can effectively extract only
$\operatorname*{poly}\left(  n\right)  $ bits of \textquotedblleft useful
information\textquotedblright\ from an advice coin:

\begin{theorem}
[Main]\label{mainthm}$\mathsf{BQPSPACE/coin}=\mathsf{BPPSPACE/coin}%
=\mathsf{PSPACE/poly}$.
\end{theorem}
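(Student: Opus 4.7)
The plan is to sandwich the classes as $\PSPACEpoly \subseteq \BPPSPACEcoin \subseteq \BQPSPACEcoin \subseteq \PSPACEpoly$; the middle inclusion is immediate since a quantum machine can simulate its classical counterpart. For $\PSPACEpoly \subseteq \BPPSPACEcoin$, given a $\PSPACEpoly$ language with advice $w_n \in \{0,1\}^m$ ($m = \poly(n)$), I would encode $w_n$ as the coin bias
$p_n := \tfrac{1}{2} + \sum_{i=1}^{m} w_{n,i}\, 2^{-(i+m+2)}$ (the offset keeps $p_n$ away from any dyadic boundary). The $\BPPSPACEcoin$ simulator flips the coin $N = 2^{\poly(n)}$ times, counts heads in $\poly(n)$ space, and by Hoeffding recovers an estimate $\tilde{p}_n$ within $2^{-\poly(n)}$ of $p_n$ with overwhelming probability, which uniquely determines $w_n$; it then runs the original $\mathsf{PSPACE}$ algorithm on $(x, w_n)$.

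The heart of the theorem is $\BQPSPACEcoin \subseteq \PSPACEpoly$. Fix a $\BQPSPACEcoin$ machine $M$ using $s = \poly(n)$ qubits, and let $f_x(p) := \Pr[M_p \text{ accepts } x]$. The key structural claim I would prove is
\[
 f_x(p) \;=\; \frac{P_x(p)}{Q_x(p)},
\]
where $P_x, Q_x \in \mathbb{Z}[p]$ have degree and coefficient bit-complexity bounded by $2^{O(s)}$, and both are computable from $x$ in $\mathsf{PSPACE}$. The reason: $M$'s single-step evolution is a superoperator $\mathcal{E}_p$ on a $4^s$-dimensional real space whose matrix entries are affine in $p$ with small-integer coefficients, and under a halting convention the long-time acceptance probability is a linear functional of the fixed point of $\mathcal{E}_p$. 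By the Aaronson--Watrous theorem on superoperator fixed points (morally, Cramer's rule applied to $(\mathcal{E}_p - I)v = 0$), this fixed point, and hence $f_x(p)$, is a rational function in $p$ of the claimed complexity. Establishing this rational structure despite $M$'s unbounded running time is the main obstacle, and it is exactly where the Aaronson--Watrous machinery is indispensable.

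Given the structural claim, the bounded-error promise gives $|f_x(p_n) - 1/2| \geq 1/6$. Call $\alpha \in [0,1]$ a \emph{transition point} if $f_x(\alpha) = 1/2$ for some $x$ of length $n$; transition points lie among the real roots of the $2^n$ polynomials $R_x := 2P_x - Q_x$, so there are at most $2^n \cdot 2^{O(s)} = 2^{\poly(n)}$ of them. Applying the Mahler--Mignotte root-separation bound to the product $\prod_{|x|=n} R_x$ shows that consecutive transition points are separated by at least $2^{-2^{O(s)}}$, so they partition $[0,1]$ into at most $2^{\poly(n)}$ intervals on each of which the accept/reject pattern (as a function of $x$) is constant. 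The advice at length $n$ is simply the $\poly(n)$-bit index of the interval containing $p_n$. Given this advice, the $\mathsf{PSPACE}$ simulator enumerates all $x' \in \{0,1\}^n$, computes the coefficients of $R_{x'}$ on demand, runs Neff's $\mathsf{NC}$ root-finder (which fits in $\mathsf{PSPACE}$ since its input has size $2^{\poly(n)}$) to locate all real roots to precision exceeding the separation bound, merges and sorts them across $x'$, looks up the advised interval, picks any dyadic $\tilde{p}_n$ in its interior, and finally simulates $M_{\tilde{p}_n}(x)$ using the Watrous equality $\mathsf{BQPSPACE} = \mathsf{PSPACE}$. Because $\tilde{p}_n$ and $p_n$ lie in a common interval, the sign of $f_x - 1/2$ agrees at both points and the simulation outputs the correct answer.
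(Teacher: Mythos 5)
Your overall architecture matches the paper's: the easy inclusion by encoding the advice in the bias and estimating it by counting, and the hard inclusion $\BQPSPACEcoin\subseteq\PSPACEpoly$ via the Aaronson--Watrous fixed-point theorem to get a degree-$2^{\poly(n)}$ rational representation of the acceptance probability, a root-separation bound, Neff's root-finder, an advice string that just indexes which ``cell'' of $[0,1]$ the true bias lies in, and a final simulation at a nearby bias with a terminating binary expansion, invoking Watrous's $\mathsf{BQPSPACE/poly}=\PSPACEpoly$.

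There is, however, a genuine gap at the step ``because $\tilde{p}_n$ and $p_n$ lie in a common interval, the sign of $f_x-1/2$ agrees at both points.'' The Aaronson--Watrous/Cramer construction does \emph{not} give $f_x(p)=P_x(p)/Q_x(p)$ for all $p$: the fixed-point matrix is obtained as $\lim_{z\to 0} z[I-(1-z)B_p]^{-1}$, and extracting the limit via L'H\^{o}pital yields an identity that is valid only away from the (finitely many) zeros of a denominator polynomial in $p$. At those exceptional values the true acceptance probability could a priori differ from the rational function, and in particular $f_x$ could be discontinuous there; one of these exceptional points could even be $p_n$ itself. Consequently, ``$f_x(\alpha)=1/2$ for some $\alpha$ between $p_n$ and $\tilde p_n$ implies $R_x(\alpha)=0$'' does not follow, and without it the intermediate-value argument that the accept/reject pattern is constant on each root-free interval collapses. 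The paper closes exactly this hole with a separate continuity lemma (its Lemma~\ref{continuousq}), proved via a ``Leaky Subspace Lemma'' and a compactness argument, showing that the true limiting acceptance probability is continuous on all of $(0,1)$ uniformly over closed subintervals; only then can crossings of the thresholds be matched with roots of the explicit polynomials. Your proposal never establishes (or even flags) continuity of $f_x$, so as written the key soundness step of the advice scheme is unsupported; fixing it requires an argument of this kind, not just bookkeeping. (A smaller, related inaccuracy: ``Cramer's rule applied to $(\mathcal{E}_p-I)v=0$'' is not directly meaningful since $B_p-I$ is singular; the resolvent-limit construction is what actually produces the rational functions, and it is precisely the source of the exceptional points.)
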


The containment $\mathsf{PSPACE/poly}\subseteq\mathsf{BPPSPACE/coin}$ is easy.
\ On the other hand, proving $\mathsf{BPPSPACE/coin}\subseteq
\mathsf{PSPACE/poly}$ appears to be no easier than the corresponding quantum
class containment. \ To prove that $\mathsf{BQPSPACE/coin}\subseteq
\mathsf{PSPACE/poly}$, we will need to understand the behavior of a
space-bounded advice coin machine $M$, as we \textit{vary} the coin bias $p$.
\ By applying a theorem of Aaronson and Watrous~\cite{awat} (which was
originally developed to understand quantum computing with closed timelike
curves), we prove the key property that, for each input $x$, \textit{the
acceptance probability }$a_{x}\left(  p\right)  $\textit{ of }$M$\textit{ is a
rational function in }$p$\textit{ of degree at most }$2^{\operatorname*{poly}%
\left(  n\right)  }$\textit{.} \ It follows that $a_{x}\left(  p\right)
$\ can \textquotedblleft oscillate\textquotedblright\ between high and low
values no more than $2^{\operatorname*{poly}\left(  n\right)  }$ times as we
vary $p$. \ Using this fact, we will show how to identify the
\textquotedblleft true\textquotedblright\ bias $p^{\ast}$\ to sufficient
precision with an advice string of $\operatorname*{poly}\left(  n\right)  $
bits. \ What makes this nontrivial is that, in our case, \textquotedblleft
sufficient precision\textquotedblright\ sometimes means $\exp\left(  n\right)
$\ bits! \ In other words, the rational functions $a_{x}\left(  p\right)
$\ really\ \textit{can} be sensitive to doubly-exponentially-small changes to
$p$. \ Fortunately, we will show that this does not happen too often, and can
be dealt with when it does.

In order to manipulate coin biases to exponentially many bits of
precision---and to interpret our advice string---in polynomial space, we use
two major tools. \ The first is a space-efficient algorithm for finding roots
of univariate polynomials, developed by Neff~\cite{neff} in the 1990s. \ The
second is a lower bound from algebraic geometry, on the spacing between
consecutive roots of a polynomial with bounded integer coefficients. \ Besides
these two tools, we will also need space-efficient linear algebra algorithms
due to Borodin, Cook, and Pippenger~\cite{bcp}.

\section{Preliminaries\label{PRELIM}}

We assume familiarity with basic notions of quantum computation. \ A detailed
treatment of space-bounded quantum Turing machines was given by
Watrous~\cite{watrous:space}.

\subsection{Classical and Quantum Space Complexity\label{spaceclasses}}

In this paper, it will generally be most convenient to consider an
\textit{asymmetric model}, in which a machine $M$ can accept only by halting
and entering a special \textquotedblleft Accept\textquotedblright\ state, but
can reject simply by never accepting. \

We say that a language $L$ is in the class $\mathsf{BPPSPACE/poly}$ if there
exists a classical probabilistic $\mathsf{PSPACE}$ machine $M$, as well as a
collection $\left\{  w_{n}\right\}  _{n\geq1}$ of polynomial-size advice
strings, such that:

\begin{enumerate}
\item[(1)] If $x\in L$, then $\Pr\left[  M\left(  x,w_{n}\right)  \text{
accepts}\right]  \geq2/3$.

\item[(2)] If $x\notin L$, then $\Pr\left[  M\left(  x,w_{n}\right)  \text{
accepts}\right]  \leq1/3$.
\end{enumerate}

Note that we do not require $M$ to accept within any fixed time bound. \ So
for example, $M$ could have expected running time that is finite, yet
\textit{doubly} exponential in $n$.\newline

The class $\mathsf{BQPSPACE/poly}$\ is defined similarly to the above, except
that now $M$ is a polynomial-space \textit{quantum} machine rather than a
classical one. \ Also, we assume that $M$ has a designated accepting state,
$\left\vert \operatorname*{Accept}\right\rangle $. \ After each computational
step, the algorithm is measured to determine whether it is in the $\left\vert
\operatorname*{Accept}\right\rangle $ state, and if so, it halts.

Watrous~\cite{watrous:space} proved the following:

\begin{theorem}
[Watrous \cite{watrous:space}]\label{bqpspace_collapse}$\mathsf{BQPSPACE/poly}%
=\mathsf{BPPSPACE/poly}=\mathsf{PSPACE/poly}$.
\end{theorem}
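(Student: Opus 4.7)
The plan is to prove the chain
\[
\mathsf{PSPACE/poly} \;\subseteq\; \mathsf{BPPSPACE/poly} \;\subseteq\; \mathsf{BQPSPACE/poly} \;\subseteq\; \mathsf{PSPACE/poly},
\]
of which only the final inclusion is nontrivial: a deterministic machine is a degenerate probabilistic one, and a probabilistic machine is a quantum machine whose state stays diagonal in the computational basis. So I would focus entirely on showing $\mathsf{BQPSPACE/poly} \subseteq \mathsf{PSPACE/poly}$.

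Fix $L \in \mathsf{BQPSPACE/poly}$, decided by a quantum machine $M$ with advice $\{w_n\}$. On input $x$ of length $n$, $M$ has a classical configuration space of size $N = 2^{\poly(n)}$, and one step of $M$ (composed with the halt-on-accept measurement) is an admissible superoperator $\Phi_x$ on $N \times N$ density matrices. Vectorizing density matrices turns $\Phi_x$ into an $N^2 \times N^2$ matrix $A_x$ whose entries are computable in $\poly(n)$ space from $M$'s transition function and $w_n$. Restricting $A_x$ to the non-accepting subspace, writing $v_0 = \vect(\rho_0)$, and letting $b$ be the linear functional that reads off the mass that has just flowed into $\ket{\operatorname{Accept}}$, the total acceptance probability equals
\[
a_x \;=\; \sum_{t=0}^{\infty} b^{\dagger} A_x^{t}\, v_0 \;=\; b^{\dagger}(I - A_x)^{-1} v_0.
\]

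The main technical step is then to evaluate this scalar in PSPACE, despite $A_x$ being an exponentially large complex matrix. For that I would invoke the Borodin--Cook--Pippenger framework: in $\mathsf{NC}(\poly) = \mathsf{PSPACE}$ one can compute determinants, inverses, and products of $2^{\poly(n)}$-dimensional rational matrices to exponentially many bits of precision. Computing each entry of $A_x$ on the fly, applying BCP to invert $I - A_x$, and contracting with $b^{\dagger}$ and $v_0$ yields $a_x$ to within $1/6$, which suffices to decide $L$ in $\mathsf{PSPACE}$ with the same advice $w_n$.

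The hard part, I expect, is justifying the replacement of the infinite time evolution by the closed form $b^{\dagger}(I-A_x)^{-1}v_0$. Bounded error only guarantees that the non-accepting subspace contracts asymptotically, not that $\lVert A_x \rVert < 1$; $A_x$ could have eigenvalues of absolute value exactly $1$ corresponding to periodic non-halting orbits. The cleanest workaround is an $\eps$-perturbation: replace $A_x$ by $(1-\eps)A_x$, compute $b^{\dagger}(I - (1-\eps)A_x)^{-1}v_0$ for a sufficiently small rational $\eps > 0$, and show the error introduced vanishes in the limit (equivalently, appeal to a fixed-point-of-superoperators argument of the sort later used in Aaronson--Watrous). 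Once the algebraic identity is secured, everything else is space-bounded linear algebra.
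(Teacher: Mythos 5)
The paper does not actually prove this statement: it cites Watrous and only remarks that his uniform proof carries over unchanged in the presence of advice. Your reconstruction is correct in outline and is essentially the same argument, which also reappears (in refined, bias-dependent form) in the paper's own machinery for Theorem~\ref{mainthm}: your regularized quantity $b^{\dagger}\left(I-(1-\eps)A_x\right)^{-1}v_0$ is exactly the resolvent $\Lambda_{z,p}=z\left[I-(1-z)B_p\right]^{-1}$ of Lemma~\ref{rational} with the coin frozen, the $\eps\to 0$ limit is handled there via Proposition~\ref{lhopital}, and the space bound comes from Theorem~\ref{bcpthm} together with Borodin's $\mathsf{NC}$-to-space reduction. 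One caution about how you phrase the last step: fixing ``a sufficiently small rational $\eps>0$'' and claiming the error is below $1/6$ is unsupported as stated, because $I-A_x$ can indeed be singular and you give no quantitative rate at which $b^{\dagger}(I-(1-\eps)A_x)^{-1}v_0$ approaches its limit. The clean version, which you also allude to, is to treat $\eps$ symbolically: by Cramer's rule each entry of $(I-(1-\eps)A_x)^{-1}$ is a ratio of polynomials in $\eps$ of degree at most $N^2$ whose coefficients are computable in $\operatorname{poly}(n)$ space, the exchange of limit and infinite sum is justified by monotone convergence since the terms $b^{\dagger}A_x^{t}v_0$ are nonnegative with partial sums at most $1$, and the limit is then read off as the ratio of the lowest-order nonvanishing coefficients (equivalently, invoke the Aaronson--Watrous fixed-point theorem). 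With that replacement, and the trivial inclusions $\mathsf{PSPACE/poly}\subseteq\mathsf{BPPSPACE/poly}\subseteq\mathsf{BQPSPACE/poly}$ as you describe, the proof is complete.
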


Note that Watrous stated his result for \textit{uniform} complexity classes,
but the proof carries over to the nonuniform case without change.

\subsection{Superoperators and Linear Algebra\label{SUPER}}

We will be interested in $S$-state quantum finite automata that can include
\textit{non-unitary transformations} such as measurements. \ The state of such
an automaton need not be a \textit{pure state} (that is, a unit vector in
$\mathbb{C}^{S}$), but can in general be a \textit{mixed state} (that is, a
probability distribution over such vectors). \ Every mixed state is uniquely
represented by an $S\times S$, Hermitian, trace-$1$\ matrix $\rho$\ called the
\textit{density matrix}. \ See Nielsen and Chuang \cite{nc}\ for more about
the density matrix formalism.

One can transform a density matrix $\rho$ using a \textit{superoperator},
which is any operation of the form%
\[
\mathcal{E}\left(  \rho\right)  =\sum_{j}E_{j}\rho E_{j}^{\dag},
\]
where the\ matrices $E_{j}\in\mathbb{C}^{S\times S}$ satisfy $\sum_{j}%
E_{j}^{\dag}E_{j}=I$.\footnote{This condition is necessary and sufficient to
ensure that $\mathcal{E}(\rho)$ is a mixed state, for every mixed state $\rho
$.}

We will often find it more convenient to work with a \textquotedblleft
vectorized\textquotedblright\ representation of mixed states and
superoperators. \ Given a density matrix $\rho\in\mathbb{C}^{S\times S}$, let
$\operatorname*{vec}\left(  \rho\right)  $ be a vector in $\mathbb{C}^{S^{2}}%
$\ containing the $S^{2}$\ entries of $\rho$. \ Similarly, given a
superoperator $\mathcal{E}$, let $\operatorname*{mat}\left(  \mathcal{E}%
\right)  \in\mathbb{C}^{S^{2}\times S^{2}}$ denote the matrix that describes
the action of $\mathcal{E}$ on vectorized mixed states, i.e., that satisfies
\[
\operatorname*{mat}\left(  \mathcal{E}\right)  \cdot\operatorname*{vec}\left(
\rho\right)  =\operatorname*{vec}\left(  \mathcal{E}\left(  \rho\right)
\right)  .
\]

We will need a theorem due to Aaronson and Watrous~\cite{awat}, which gives us
constructive access to the \textit{fixed-points} of superoperators.

\begin{theorem}
[Aaronson-Watrous \cite{awat}]\label{fixpoint}Let $\mathcal{E}\left(
\rho\right)  $ be a superoperator on an $S$-dimensional system. \ Then there
exists a second superoperator $\mathcal{E}_{\operatorname*{fix}}\left(
\rho\right)  $ on the same system, such that:

\begin{enumerate}
\item[(i)] $\mathcal{E}_{\operatorname*{fix}}\left(  \rho\right)  $ is a
fixed-point of $\mathcal{E}$ for every mixed state $\rho$: that is,
$\mathcal{E}\left(  \mathcal{E}_{\operatorname*{fix}}(\rho)\right)
=\mathcal{E}(\rho)$.

\item[(ii)] Every mixed state $\rho$ that is a fixed-point of $\mathcal{E}$ is
also a fixed-point of $\mathcal{E}_{\operatorname*{fix}}$.

\item[(iii)] Given the entries of $\operatorname*{mat}\left(  \mathcal{E}%
\right)  $, the entries of $\operatorname*{mat}\left(  \mathcal{E}%
_{\operatorname*{fix}}\right)  $ can be computed in $\operatorname*{polylog}%
(S)$\ space.
\end{enumerate}
\end{theorem}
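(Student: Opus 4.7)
The plan is to take $\mathcal{E}_{\operatorname*{fix}}$ to be the superoperator whose matrix $P := \operatorname*{mat}(\mathcal{E}_{\operatorname*{fix}})$ is the spectral projector onto the $1$-eigenspace of $M := \operatorname*{mat}(\mathcal{E})$. Since the fixed points of $\mathcal{E}$ correspond exactly to vectors $\operatorname*{vec}(\rho)\in\ker(M-I)$, property (ii) will be forced by this choice. To construct $P$ and show it really is a superoperator, I would write it as the Cesàro average
\[
P \;=\; \lim_{T \to \infty} \frac{1}{T} \sum_{t=0}^{T-1} M^{t}.
\]
Invoking the standard structural fact that the eigenvalues of a CPTP map lie in the closed unit disk and that those on the unit circle are semisimple (no nontrivial Jordan blocks), this limit exists and equals the projector onto $\ker(M-I)$ along the sum of generalized eigenspaces for the eigenvalues $\lambda\neq 1$.

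Verifying (i) and (ii) then becomes routine linear algebra. Each partial average $\frac{1}{T}\sum_{t<T}M^{t}$ is a convex combination of the CPTP maps $\mathcal{E}^{t}$, hence is itself $\operatorname*{mat}$ of a superoperator; since the set of superoperators is closed under entrywise limits, $\mathcal{E}_{\operatorname*{fix}}$ is a legitimate superoperator. For (i), the identity $M\cdot\frac{1}{T}\sum_{t=0}^{T-1}M^{t}-\frac{1}{T}\sum_{t=0}^{T-1}M^{t}=\frac{1}{T}(M^{T}-I)$ has norm vanishing as $T\to\infty$ (since $\|M^{T}\|$ stays bounded), giving $MP=P$, so the output of $\mathcal{E}_{\operatorname*{fix}}$ is always a fixed point of $\mathcal{E}$. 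Property (ii) is immediate: if $Mv=v$, then $M^{t}v=v$ for all $t$, so $Pv=v$.

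The main obstacle is property (iii), computing the entries of the $S^{2}\times S^{2}$ matrix $P$ from those of $M$ in $\operatorname*{polylog}(S)$ space. I would sidestep the Cesàro limit by producing an explicit polynomial-in-$M$ expression for $P$. Compute the characteristic polynomial $\chi_{M}(x)$ and factor it as $\chi_{M}(x)=(x-1)^{k}q(x)$ with $q(1)\neq 0$; then by Bezout there exist $a,b\in\mathbb{C}[x]$ with $a(x)(x-1)^{k}+b(x)q(x)=1$, and one checks that $b(M)q(M)$ is the spectral projector onto the generalized $1$-eigenspace of $M$ along its invariant complement---which, by the semisimplicity of the unit-circle spectrum noted above, coincides with $\ker(M-I)$ and hence with $P$. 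Each ingredient (characteristic polynomial of an $S^{2}\times S^{2}$ matrix, extended Euclidean algorithm on univariate polynomials, and polynomial-in-matrix evaluation by repeated squaring) is known to lie in $\mathsf{NC}^{2}$, and in particular in $\operatorname*{polylog}(S)$ space, via the Borodin--Cook--Pippenger framework cited in Section~\ref{INTRO}. The delicate point is that intermediate quantities (coefficients of $q$, entries of high matrix powers) may have bit-length that is large relative to the space budget, but the BCP machinery is designed precisely to handle such computations without ever materializing the large intermediates in full.
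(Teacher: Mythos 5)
Your construction is essentially correct, but it takes a genuinely different route from the one the paper relies on (the paper does not reprove Theorem~\ref{fixpoint}; it imports the Aaronson--Watrous construction, which is spelled out inside the proof of Lemma~\ref{rational}). There, $\mathcal{E}_{\operatorname*{fix}}$ arises as the Abel/resolvent limit $\Lambda_{p}=\lim_{z\rightarrow0}z\left[  I-(1-z)B_{p}\right]  ^{-1}=\lim_{z\rightarrow0}\sum_{t\geq0}z(1-z)^{t}B_{p}^{t}$: each such expression is a convex combination of CPTP maps, so its entries are bounded rational functions of $z$, and boundedness alone guarantees the limit exists---no appeal to Jordan structure is needed---while computability reduces to one symbolic matrix inversion (Cramer's rule via Theorem~\ref{bcpthm}) followed by the lowest-nonvanishing-coefficient extraction of Proposition~\ref{lhopital}. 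Your Cesàro-average definition instead needs the semisimplicity of the peripheral spectrum of a CPTP map twice: for the limit to exist and to identify the generalized $1$-eigenspace with $\ker(M-I)$; that fact is true but is a nontrivial structural input the resolvent route avoids. Your plan for (iii) (characteristic polynomial of an $S^{2}\times S^{2}$ matrix, exact multiplicity $k$ of the root $1$, extended Euclid on polynomials of degree $2^{\operatorname{poly}(n)}$, evaluation of $b(M)q(M)$ by repeated squaring) is plausible---each ingredient is in $\mathsf{NC}$ and hence polylog space via Borodin's reduction---but it is markedly heavier, and its delicate step (exact zero-tests on quantities of exponential bit-length to pin down $k$ and certify $q(1)\neq0$) deserves the same explicit care the paper devotes to finding the least $k^{\ast}$ with $d_{k^{\ast}}\not\equiv0$. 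What your approach buys is a clean closed-form spectral projector; what the paper's buys is a lighter computational footprint and, crucially for Lemma~\ref{rational}, a representation that remains a rational function of the coin bias $p$ uniformly (in your scheme the multiplicity $k$ can jump as $p$ varies, which would complicate the parametric argument). One small remark: the displayed equation in (i) should read $\mathcal{E}\left(  \mathcal{E}_{\operatorname*{fix}}(\rho)\right)  =\mathcal{E}_{\operatorname*{fix}}(\rho)$; your verification $MP=P$ proves exactly the intended property.
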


The following fact, which we call the \textquotedblleft Leaky Subspace
Lemma,\textquotedblright will play an important role in our analysis of
quantum finite automata. \ Intuitively it says that, if repeatedly applying a
linear transformation $A$ to a vector $y$ \textquotedblleft
leaks\textquotedblright\ $y$\ into the span of another vector $x$, then there
is a uniform lower bound on the rate at which the leaking happens.

\begin{lemma}
[Leaky Subspace Lemma]\label{leaky}Let $A\in\mathbb{C}^{n\times n}$\ and
$x\in\mathbb{C}^{n}$. \ Suppose that for all vectors $y$ in some compact set
$U\subset\mathbb{C}^{n}$, there exists a positive integer $k$ such that
$x^{\dag}A^{k}y\neq0$. \ Then%
\[
\inf_{y\in U}\max_{k\in\left[  n\right]  }\left\vert x^{\dag}A^{k}y\right\vert
>0.
\]

\end{lemma}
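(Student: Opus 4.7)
The plan is to convert the quantifier ``$\exists k \geq 1$'' appearing in the hypothesis into ``$\exists k \in [n]$'' using the Cayley--Hamilton theorem, and then to close the argument with a standard compactness step. The only real content sits in the first reduction.

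First, I would observe that for every integer $k \geq 1$, the matrix $A^k$ lies in the $\mathbb{C}$-linear span of $A, A^{2}, \ldots, A^{n}$. By Cayley--Hamilton the minimal polynomial of $A$ has degree at most $n$, so any $A^{k-1}$ can be written as a linear combination of $I, A, \ldots, A^{n-1}$; left-multiplying by $A$ then expresses $A^{k}$ as a linear combination of $A, A^{2}, \ldots, A^{n}$. Consequently, for any fixed $y \in \mathbb{C}^{n}$, the scalar $x^{\dagger} A^{k} y$ is a linear combination (with coefficients depending only on $A$, not on $y$) of the $n$ scalars $x^{\dagger} A^{j} y$ for $j \in [n]$. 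Hence the simultaneous vanishing $x^{\dagger} A^{j} y = 0$ for all $j \in [n]$ already forces $x^{\dagger} A^{k} y = 0$ for \emph{every} $k \geq 1$. Contrapositively, the hypothesis of the lemma yields that
\[
f(y) \; := \; \max_{k \in [n]} \bigl| x^{\dagger} A^{k} y \bigr|
\]
is strictly positive at every $y \in U$.

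Second, I would finish by continuity and compactness. The function $f$ is the pointwise maximum of finitely many continuous functions of $y$, hence is itself continuous. Since $f > 0$ on the compact set $U$, the extreme value theorem yields $\inf_{y \in U} f(y) = \min_{y \in U} f(y) > 0$, which is precisely the claimed bound. I do not foresee a genuine obstacle: the only subtlety is that the span reduction should start at $k=1$ rather than $k=0$, which is precisely what the extra left-multiplication by $A$ accomplishes.
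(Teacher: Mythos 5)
Your proof is correct. The overall shape matches the paper's: both arguments reduce the existential quantifier over all positive $k$ to $k\in[n]$, and then conclude by observing that $f(y)=\max_{k\in[n]}\lvert x^{\dag}A^{k}y\rvert$ is a continuous, everywhere-positive function on the compact set $U$, hence has positive minimum. Where you differ is in how the reduction is justified. The paper argues per fixed $y$: it looks at the spaces $V_t=\spann\{Ay,\dots,A^{t}y\}$, notes that once the dimension fails to grow it never grows again, so $\{Ay,\dots,A^{d}y\}$ with $d\le n$ spans all iterates $A^{k}y$, and deduces that a nonzero $x^{\dag}A^{k}y$ forces one with $k\le d\le n$. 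You instead invoke Cayley--Hamilton at the matrix level: every $A^{k}$ with $k\ge1$ lies in $\spann\{A,\dots,A^{n}\}$ (the shift by one power of $A$ correctly handles the fact that the exponents start at $1$), so the vanishing of $x^{\dag}A^{j}y$ for all $j\in[n]$ kills $x^{\dag}A^{k}y$ for every $k\ge1$, uniformly in $y$. Your version is slightly stronger (the spanning coefficients are independent of $y$, though that uniformity is not needed) and a bit shorter; the paper's version is self-contained and avoids citing Cayley--Hamilton, proving only the $y$-dependent spanning fact it actually uses. Either argument is complete and sound.
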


\begin{proof}
It suffices to prove the following claim: \textit{for all }$y\in U$\textit{,
there exists a }$k\in\left[  n\right]  $\textit{\ such that }$x^{\dag}%
A^{k}y\neq0$\textit{.} \ For given this claim, Lemma~\ref{leaky} follows by
the fact that $f\left(  y\right)  :=\max_{k\in\left[  n\right]  }\left\vert
x^{\dag}A^{k}y\right\vert $\ is a continuous positive function on a compact
set $U$.

We now prove the claim. \ Let $V_{t}$\ be the vector space\ spanned by
$\left\{  Ay,A^{2}y,\ldots,A^{t}y\right\}  $, let $V:=\bigcup_{t > 0 }V_{t}$,
and let $d=\dim V$. \ Then clearly $d\leq n$ and $\dim\left(  V_{t-1}\right)
\leq\dim\left(  V_{t}\right)  \leq\dim\left(  V_{t-1}\right)  +1$\ for all
$t$. \ Now suppose $\dim\left(  V_{t}\right)  =\dim\left(  V_{t-1}\right)
$\ for some $t$. \ Then it must be possible to write $A^{t}y$\ as a linear
combination of $Ay,\ldots,A^{t-1}y$:%
\[
A^{t}y=c_{1}Ay+\cdots+c_{t-1}A^{t-1}y.
\]
But this means that every \textit{higher} iterate ($A^{t+1}y$, $A^{t+2}y$,
etc.) is also expressible as a linear combination of the lower iterates: for
example,%
\[
A^{t+1}y=c_{1}A^{2}y+\cdots+c_{t-1}A^{t}y.
\]
Therefore $d=\dim\left(  V_{t-1}\right)  $. \ The conclusion is that
$\mathcal{B}:=\left\{  Ay,A^{2}y,\ldots,A^{d}y\right\}  $ is a basis for $V$.
\ But then, if there exists a positive integer $k$ such that $v^{\dag}%
A^{k}w\neq0$, then there must also be a $k\leq d$\ such that $x^{\dag}%
A^{k}y\neq0$, by the fact that $\mathcal{B}$\ is a basis. \ This proves the claim.
\end{proof}

\subsection{Coin-Flipping Finite Automata\label{qfasec}}

It will often be convenient to use the language of finite automata rather than
that of Turing machines. \ We model a coin-flipping quantum finite automaton
as a \textit{pair} of superoperators $\mathcal{E}_{0},\mathcal{E}_{1}$. \ Say
that a coin has \textit{bias} $p$ if it lands heads with independent
probability $p$ every time it is flipped. \ (A coin here is just a
$0/1$-valued random variable, with \textquotedblleft heads\textquotedblright%
\ meaning a $1$ outcome.) \ Let $\$_{p}$\ denote a coin with bias $p$. \ When
the automaton is given $\$_{p}$, its state evolves according to the
superoperator
\[
\mathcal{E}_{p}:=p\mathcal{E}_{1}+\left(  1-p\right)  \mathcal{E}_{0}.
\]
In our model, the superoperators $\mathcal{E}_{0},\mathcal{E}_{1}$ both
incorporate a \textquotedblleft measurement step\textquotedblright\ in which
the automaton checks whether it is in a designated basis state $\left\vert
\operatorname*{Accept}\right\rangle $, and if so, halts and accepts.
\ Formally, this is represented by a projective measurement with observables
$\left\{  \Gamma_{\operatorname*{Acc}},I-\Gamma_{\operatorname*{Acc}}\right\}
$, where $\Gamma_{\operatorname*{Acc}}:=\left\vert \operatorname*{Accept}%
\right\rangle \left\langle \operatorname*{Accept}\right\vert $.

\subsection{Advice Coin Complexity Classes\label{coinsubsec}}

Given a Turing machine $M$, let $M\left(  x,\$_{p}\right)  $\ denote $M$ given
input $x$ together with the ability to flip $\$_{p}$\ at any time step. \ Then
$\mathsf{BPPSPACE/coin}$, or $\mathsf{BPPSPACE}$\textit{\ with an advice
coin}, is defined as the class of languages $L$\ for which there exists a
$\mathsf{PSPACE}$ machine $M$, as well as a sequence of real numbers $\left\{
p_{n}\right\}  _{n\geq1}$ with $p_{n}\in\left[  0,1\right]  $, such that for
all inputs $x\in\left\{  0,1\right\}  ^{n}$:

\begin{enumerate}
\item[(1)] If $x\in L$, then $M\left(  x,\$_{p_{n}}\right)  $\ accepts with
probability at least $2/3$ over the coin flips.

\item[(2)] If $x\notin L$, then $M\left(  x,\$_{p_{n}}\right)  $\ accepts with
probability at most $1/3$ over the coin flips.
\end{enumerate}

Note that there is no requirement for $M$ to halt after at most exponentially
many steps, or even to halt with probability $1$; also, $M$ may
\textquotedblleft reject\textquotedblright\ its input by looping forever. This
makes our main result, which bounds the computational power of advice coins, a
stronger statement. \ Also note that $M$ has no source of randomness other
than the coin $\$_{p_{n}}$. \ However, this is not a serious restriction,
since $M$ can easily use $\$_{p_{n}}$ to generate unbiased random bits if
needed, by using the \textquotedblleft von Neumann trick.\textquotedblright

Let $q\left(  n\right)  $ be a polynomial space bound. \ Then we model a
$q\left(  n\right)  $-space quantum Turing machine $M$ with an advice coin as
a $2^{q\left(  n\right)  }$-state automaton, with state space $\{|y\rangle
\}_{y\in\{0,1\}^{q(n)}}$ and initial state $\left\vert 0^{q(n)}\right\rangle
$. \ Given advice coin $\$_{p}$, the machine's state evolves according to the
superoperator $\mathcal{E}_{p}=p\mathcal{E}_{1}+\left(  1-p\right)
\mathcal{E}_{0}$, where $\mathcal{E}_{0},\mathcal{E}_{1}$ depend on $x$ and
$n$. \ The individual entries of the matrix representations of $\mathcal{E}%
_{0},\mathcal{E}_{1}$ are required to be computable in polynomial space.

The machine $M$ has a designated $\left\vert \operatorname*{Accept}%
\right\rangle $\ state. \ In vectorized notation, we let
$v_{\operatorname*{Acc}}:=\operatorname*{vec}\left(  \left\vert
\operatorname*{Accept}\right\rangle \left\langle \operatorname*{Accept}%
\right\vert \right)  $. \ Since $\left\vert \operatorname*{Accept}%
\right\rangle $ is a computational basis state, $v_{\operatorname*{Acc}}$ has
a single coordinate with value $1$ and is $0$ elsewhere. \ As in
Section~\ref{qfasec}, the machine measures after each computation step to
determine whether it is in the $\left\vert \operatorname*{Accept}\right\rangle
$ state.

We let $\rho_{t}$ denote the algorithm's state after $t$ steps, and let
$v_{t}:=\operatorname*{vec}\left(  \rho_{t}\right)  $. \ If we perform a
standard-basis measurement after $t$ steps, then the probability
$a_{x,t}\left(  p\right)  $ of seeing $\left\vert \operatorname*{Accept}%
\right\rangle $ is given by
\[
a_{x,t}\left(  p\right)  =\left\langle \operatorname*{Accept}\right\vert
\rho_{t}\left\vert \operatorname*{Accept}\right\rangle =v_{\operatorname*{Acc}%
}^{\dag}v_{t}.
\]
Note that $a_{x,t}\left(  p\right)  $\ is nondecreasing in $t$.

Let $a_{x}\left(  p\right)  :=\lim_{t\rightarrow\infty}a_{x,t}\left(
p\right)  $. \ Then $\mathsf{BQPSPACE/coin}$ is the class of languages $L$ for
which there exists a $\mathsf{BQPSPACE}$ machine $M$, as well as a sequence of
advice coin biases $\left\{  p_{n}\right\}  _{n\geq1}$, such that for all
$x\in\left\{  0,1\right\}  ^{n}$:

\begin{enumerate}
\item[(1)] If $x\in L$, then $a_{x}\left(  p_{n}\right)  \geq2/3$.

\item[(2)] If $x\notin L$, then $a_{x}\left(  p_{n}\right)  \leq1/3$.
\end{enumerate}

\subsection{The Hellman-Cover Theorem\label{HCSEC}}

In 1970 Hellman and Cover~\cite{hc}\ proved the following important result
(for convenience, we state only a special case).

\begin{theorem}
[Hellman-Cover Theorem \cite{hc}]\label{hcthm}Let $\$_{p}$ be a coin with bias
$p$, and let $M\left(  \$_{p}\right)  $\ be a probabilistic finite automaton
that takes as input an infinite sequence of independent flips of $\$_{p}$, and
can `halt and accept' or `halt and reject' at any time step. \ Let
$a_{t}\left(  p\right)  $ be the probability that $M\left(  \$_{p}\right)
$\ has accepted after $t$ coin flips, and let $a\left(  p\right)
=\lim_{t\rightarrow\infty}a_{t}\left(  p\right)  $. \ Suppose that $a\left(
p\right)  \leq1/3$\ and $a\left(  p+\varepsilon\right)  \geq2/3$, for some $p$
and $\varepsilon>0$. \ Then $M$\ must have $\Omega\left(  p\left(  1-p\right)
/\varepsilon\right)  $ states.
\end{theorem}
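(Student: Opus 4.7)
The plan is to reduce the theorem to a pointwise bound on the derivative $a'(p)$. If $a(p) \leq 1/3$ and $a(p+\varepsilon) \geq 2/3$, the mean value theorem supplies some $p^{\ast} \in (p,p+\varepsilon)$ with $a'(p^{\ast}) \geq 1/(3\varepsilon)$. It therefore suffices to prove
\[
  \sup_{p \in (0,1)} \; p(1-p)\,|a'(p)| \;\leq\; O(S),
\]
where $S$ is the number of states; this immediately yields $S = \Omega(p(1-p)/\varepsilon)$.

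\textbf{Analytic setup.} Let $M_p = pM_1 + (1-p)M_0$ be the sub-stochastic transition matrix of $M$ on its non-halting states, let $v_0$ denote the initial distribution, and let $c_p = pc_1 + (1-p)c_0$ collect the one-step acceptance probabilities. Then $a(p) = c_p^{T}(I - M_p)^{-1} v_0$, a rational function of $p$ whose denominator $\det(I - M_p)$ is nonzero on any open subinterval where the transient part of the chain has spectral radius $< 1$; in particular, $a$ is smooth almost everywhere, and $a(p+\varepsilon) - a(p) = \int_{p}^{p+\varepsilon} a'(t)\,dt$ still makes sense.

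\textbf{Bounding the derivative.} The key step, and the one I expect to be the main obstacle, is the bound on $|a'(p)|$. I would approach it via the score-function (likelihood-ratio) identity: if $T$ is the random halting time and $K_T$ the number of heads observed by time $T$, then formally
\[
  a'(p) \;=\; \frac{1}{p(1-p)}\,\mathbb{E}_p\!\left[(K_T - pT)\,\mathbf{1}_{\text{Accept}}\right].
\]
Cauchy--Schwarz together with the optional-stopping identity $\mathrm{Var}_p(K_T - pT) = p(1-p)\,\mathbb{E}_p[T]$ gives a preliminary estimate $|a'(p)| \leq \sqrt{\mathbb{E}_p[T]/(p(1-p))}$. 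The trouble is that $\mathbb{E}_p[T]$ can be infinite, so this preliminary bound is not enough. To repair it, I would decompose the chain into its communicating classes and truncate the walk into epochs corresponding to successive re-entries of the recurrent transient component; within each epoch the contribution to $a'(p)$ can be estimated using the argument above with a truly finite stopping time, and the Markov property lets the contributions sum geometrically. The number of classes contributing is $O(S)$, which is where the linear dependence on $S$ enters.

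\textbf{Why this should work, and where it is delicate.} The whole approach rests on the fact that the information $M$ extracts about $p$ from each coin flip is centered and has variance $p(1-p)$, so small changes in $p$ can only shift acceptance probabilities by a factor proportional to $1/(p(1-p))$ times a quantity controlled by the Markov-chain mixing. The hardest part will be a clean treatment of automata whose halting probability is close to zero, where $\mathbb{E}_p[T] = \infty$ but $a(p)$ is still well-defined; here one must carefully separate the recurrent ``non-halting'' dynamics (which contribute nothing to $a'(p)$) from the transient ``halting'' part, and bound the number of distinct such behaviors by $S$.
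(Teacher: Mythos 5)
First, note that the paper itself does not prove this theorem; it is quoted from Hellman and Cover \cite{hc}, so your attempt can only be judged against the known argument. Your reduction of the theorem to a weighted derivative bound $\sup_{p}p(1-p)|a'(p)|\leq O(S)$ is reasonable in spirit (though the mean value theorem gives $S=\Omega(p^{\ast}(1-p^{\ast})/\varepsilon)$ at an unknown $p^{\ast}$; you should instead integrate $|a'(t)|\leq CS/(t(1-t))$ over $[p,p+\varepsilon]$ to recover the stated $p(1-p)$ dependence). The genuine gap is that the derivative bound itself --- which carries the entire content of the theorem --- is not established. The score-function identity plus Cauchy--Schwarz and Wald's identity yields $|a'(p)|\leq\sqrt{\mathbb{E}_p[T]/(p(1-p))}$, and $\mathbb{E}_p[T]$ has no relation whatsoever to the number of states $S$: transition probabilities are arbitrary reals, so even a three-state automaton can have $\mathbb{E}_p[T]$ arbitrarily large or infinite while its derivative stays small, and nothing in this route ever produces a factor of $S$. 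Your proposed repair does not close this: ``the number of classes contributing is $O(S)$'' bounds only the number of communicating classes, but the per-epoch score contribution, the number of epochs, and the claimed geometric decay rate all depend on the magnitudes of the transition probabilities, not on $S$. As written, there is no step at which the state count actually enters the estimate, so the claimed conclusion $p(1-p)|a'(p)|=O(S)$ is asserted rather than derived; moreover that pointwise bound is at least as strong as the theorem itself, so assuming it begs the question.

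The known proof works multiplicatively rather than additively, and that is exactly where $S$ enters. Changing the bias from $p$ to $p+\varepsilon$ changes each single transition probability by a factor at most $\gamma=\max\{(p+\varepsilon)/p,\ (1-p)/(1-p-\varepsilon)\}=1+O(\varepsilon/(p(1-p)))$, and the key lemma of Hellman--Cover is that the relevant limiting (stationary/absorption) probabilities of an $S$-state chain under the two biases can differ by a factor of at most $\gamma^{O(S)}$, because states are linked by paths of length less than $S$; distinguishing with constant error gap then forces $\gamma^{\Omega(S)}\geq\Omega(1)$, i.e.\ $S=\Omega(1/\ln\gamma)=\Omega(p(1-p)/\varepsilon)$. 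A per-flip variance (Fisher-information-style) argument measures the information additively and is blind to the state bound; if you want to salvage your derivative-based plan, you would in effect need the same multiplicative lemma, e.g.\ a bound of the form $\frac{d}{dp}\ln a(p)\leq O(S)/(p(1-p))$, proved by a path/likelihood-ratio argument in the state graph --- which is the step currently missing.
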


Let us make two remarks about Theorem \ref{hcthm}. \ First, the theorem is
easily seen to be essentially tight: for any $p$ and $\varepsilon>0$, one can
construct a finite automaton with $O\left(  p\left(  1-p\right)
/\varepsilon\right)  $\ states such that $a\left(  p + \varepsilon\right)  -
a\left(  p \right)  = \Omega(1)$. \ To do so, label the automaton's states by
integers in $\left\{  -K,\ldots,K\right\}  $, for some $K=O\left(  p\left(
1-p\right)  /\varepsilon\right)  $. \ Let the initial state be $0$. \ Whenever
a heads is seen, increment the state by $1$ with probability $1-p$\ and
otherwise do nothing; whenever a tails is seen, decrement the state by $1$
with probability $p$ and otherwise do nothing. \ If $K$ is ever reached, then
halt and accept (i.e., guess that the bias is $p+\varepsilon$); if $-K$\ is
ever reached, then halt and reject (i.e., guess that the bias is $p$).

Second, Hellman and Cover actually proved a stronger result. Suppose we
consider the relaxed model in which the finite automaton $M$ never needs to
halt, and one defines $a\left(  p\right)  $\ to be the fraction of time that
$M$ spends in a designated subset of `Accepting' states in the limit of
infinitely many coin flips (this limit exists with probability 1). Then the
lower bound $\Omega\left(  p\left(  1-p\right)  /\varepsilon\right)  $\ on the
number of states still holds. \ We will have more to say about finite automata
that \textquotedblleft accept in the limit\textquotedblright\ in Section
\ref{MODELS}.

\subsection{Facts About Polynomials\label{POLYFACT}}

We now collect some useful facts about polynomials and rational functions, and
about small-space algorithms for root-finding and linear algebra. \ First we
will need the following fact, which follows easily from L'H\^{o}pital's Rule.

\begin{proposition}
\label{lhopital}Whenever the limit exists,%
\[
\lim_{z\rightarrow0}\frac{c_{0}+c_{1}z+\cdots+c_{m}z^{m}}{d_{0}+d_{1}%
z+\cdots+d_{m}z^{m}}=\frac{c_{k}}{d_{k}},
\]
where $k$ is the smallest integer such that $d_{k}\neq0$.
\end{proposition}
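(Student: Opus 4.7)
The plan is to reduce the limit to a simple evaluation by continuity, after cancelling a common factor of $z^k$ from numerator and denominator. First I would dispose of the trivial case $k = 0$: if $d_{0} \neq 0$, both polynomials are continuous and the denominator is nonzero at $z=0$, so the limit is immediately $c_{0}/d_{0} = c_{k}/d_{k}$.

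For $k \geq 1$, I would factor the denominator as $d_{0} + d_{1} z + \cdots + d_{m} z^{m} = z^{k} D(z)$, where $D(z) := d_{k} + d_{k+1} z + \cdots + d_{m} z^{m-k}$ satisfies $D(0) = d_{k} \neq 0$. The key step is then to argue that the hypothesis ``the limit exists'' forces the numerator to vanish at $z=0$ to order at least $k$, so that $c_{0} = c_{1} = \cdots = c_{k-1} = 0$ and the numerator likewise factors as $z^{k} C(z)$ with $C(0) = c_{k}$. Concretely, if some $c_{j}$ with $j < k$ were nonzero (taking $j$ to be the smallest such index), then near $z = 0$ the ratio behaves like $(c_{j} z^{j})/(d_{k} z^{k}) = (c_{j}/d_{k})\, z^{j-k}$, whose absolute value diverges as $z \to 0$, contradicting the assumed existence of a finite limit. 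Equivalently, one can apply L'H\^{o}pital's rule $k$ times, which is presumably why the proposition carries that label.

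Once both factorizations are in place, cancelling the common $z^{k}$ factor gives
\[
\lim_{z \to 0} \frac{c_{0} + c_{1} z + \cdots + c_{m} z^{m}}{d_{0} + d_{1} z + \cdots + d_{m} z^{m}} \;=\; \lim_{z \to 0} \frac{C(z)}{D(z)} \;=\; \frac{C(0)}{D(0)} \;=\; \frac{c_{k}}{d_{k}},
\]
by continuity of $C$ and $D$ and the fact that $D(0) \neq 0$. There is no real obstacle in this argument; the only substantive point is justifying why the existence of the limit forces the vanishing orders of the numerator and denominator at $z=0$ to match, and this follows from the divergence argument (or L'H\^{o}pital) sketched above.
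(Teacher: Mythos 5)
Your proof is correct. The paper does not actually spell out an argument for this proposition---it simply asserts that it ``follows easily from L'H\^{o}pital's Rule''---so your write-up supplies the missing details, and it does so in a way that is arguably cleaner than a literal repeated application of L'H\^{o}pital: your key observation, that existence of a finite limit forces $c_{0}=\cdots=c_{k-1}=0$ (since otherwise the ratio grows like $|z|^{j-k}$ for the minimal nonzero index $j<k$), is exactly the step one would anyway need in order to justify that the intermediate forms are $0/0$ before differentiating $k$ times; once that is in hand, your factor-and-cancel evaluation $\lim C(z)/D(z)=c_{k}/d_{k}$ by continuity is immediate. The only interpretive point, which you handle correctly, is that ``the limit exists'' must be read as a finite limit (the conclusion is false for infinite limits), and this is also the reading needed in the paper's application, where the entries of $\Lambda_{z,p}$ are bounded in absolute value by $1$ so the relevant limits are indeed finite.
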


The next two facts are much less elementary. \ First, we state a bound on the
\textit{minimum spacing} between zeros, for a low-degree polynomial with
integer coefficients.

\begin{theorem}
[{\cite[p. 359, Corollary 10.22]{basu}}]\label{sepbound}Let $P(x)$ be a
degree-$d$ univariate polynomial, with integer coefficients of bitlength at
most $\tau$. \ If $z,z^{\prime}\in\mathbb{C}$ are distinct roots of $P$, then%
\[
\left\vert z-z^{\prime}\right\vert \geq2^{-O\left(  d\log d+\tau d\right)  }.
\]
In particular, if $P$ is of degree at most $2^{\operatorname*{poly}\left(
n\right)  }$, and has integer coefficients with absolute values bounded by
$2^{\operatorname*{poly}\left(  n\right)  }$, then $\left\vert z-z^{\prime
}\right\vert \geq2^{-2^{\operatorname*{poly}\left(  n\right)  }}$.
\end{theorem}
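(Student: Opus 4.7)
The plan is to prove this via the classical Mahler--Mignotte root separation bound, whose engine is the following observation: if $P \in \bZ[x]$ is squarefree, then the discriminant $\mathrm{Disc}(P)$ is a nonzero integer and therefore has absolute value at least $1$. First I would reduce to the squarefree case. The distinct roots of $P$ coincide with the roots of $Q := P / \gcd(P,P')$, and by standard subresultant-style bounds on the Euclidean algorithm over $\bZ[x]$, the coefficients of $Q$ have bitlength $O(\tau + d\log d)$, which is still within the target budget. So we may assume $P$ is squarefree of degree $d$, with leading coefficient $a$ of bitlength $\leq \tau$ and complex roots $z_1,\dots,z_d$.

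Next, write
\[
\mathrm{Disc}(P) \;=\; a^{2d-2} \prod_{i<j} (z_i - z_j)^2,
\]
and let $S := \min_{i<j}|z_i - z_j|$ be the minimum separation, attained at some pair $(i^*,j^*)$. Rearranging,
\[
S^2 \;=\; \frac{|\mathrm{Disc}(P)|}{\,|a|^{2d-2}\prod_{(i,j)\neq (i^*,j^*)} |z_i-z_j|^2\,}
\;\geq\; \frac{1}{\,|a|^{2d-2}\prod_{(i,j)\neq (i^*,j^*)} |z_i-z_j|^2\,},
\]
so everything reduces to upper-bounding the denominator. For this step I would use the Mahler measure $M(P) = |a|\prod_k \max(1,|z_k|)$, together with Landau's inequality $M(P)\leq \|P\|_2 \leq \sqrt{d+1}\cdot 2^{\tau}$, which gives $M(P)\leq 2^{\tau + O(\log d)}$. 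The elementary estimate $|z_i-z_j|\leq 2\max(1,|z_i|)\max(1,|z_j|)$ then yields
\[
\prod_{i<j}|z_i-z_j| \;\leq\; 2^{\binom{d}{2}}\prod_k \max(1,|z_k|)^{d-1} \;=\; 2^{\binom{d}{2}}\bigl(M(P)/|a|\bigr)^{d-1},
\]
and in particular the product over $(i,j)\neq (i^*,j^*)$ is no larger. Combining, the denominator is at most $|a|^{2d-2}\cdot 2^{O(d^2)}\cdot(M(P)/|a|)^{2d-2} = 2^{O(d\log d + d\tau) + O(d^2)}$; one then sharpens the $O(d^2)$ term by the Hadamard/resultant bound $|\mathrm{Disc}(P)|\leq d^d M(P)^{2d-2}$, applied in reverse to strip the extra factors, to land at $2^{O(d\log d + d\tau)}$. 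Taking square roots gives $S \geq 2^{-O(d\log d + \tau d)}$, which is exactly the stated bound.

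I expect the main obstacle to be accounting honestly for the $d^d$ and $2^{\binom{d}{2}}$ type factors so that they collapse into $2^{O(d\log d + \tau d)}$ rather than into the too-weak $2^{O(\tau d^2)}$; this is precisely where the Mahler-measure bookkeeping (as opposed to the naive Cauchy bound $|z_k|\leq 1 + 2^\tau$) has to be deployed. The second subtle point is the squarefree reduction: one must check that passing from $P$ to $P/\gcd(P,P')$ does not blow up the coefficients beyond polynomial bitlength, which follows from Mignotte's bounds on divisors of integer polynomials. Everything else is a routine manipulation of the identity $\mathrm{Disc}(P) = a^{2d-2}\prod(z_i-z_j)^2$, anchored by the single nontrivial input $|\mathrm{Disc}(P)|\geq 1$.
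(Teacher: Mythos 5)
You should note first that the paper does not prove this statement at all: it is imported verbatim from Basu--Pollack--Roy (Corollary 10.22), whose proof runs through subresultants/subdiscriminants and handles repeated roots directly, with no squarefree reduction. Your plan is the classical Mahler--Mignotte route, which is a perfectly legitimate alternative, but as written it has a genuine gap precisely at the step you yourself flag as the main obstacle. The steps you actually carry out ($|\mathrm{Disc}|\ge 1$ for the squarefree part, $|z_i-z_j|\le 2\max(1,|z_i|)\max(1,|z_j|)$, Landau's bound $M(P)\le\sqrt{d+1}\,2^{\tau}$) only give $S\ge 2^{-O(d^2+\tau d)}$, because of the $2^{\binom{d}{2}}$ factor. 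The proposed repair---using $|\mathrm{Disc}(P)|\le d^{d}M(P)^{2d-2}$ ``in reverse''---cannot work: the denominator you must upper-bound is exactly $|\mathrm{Disc}(P)|/S^{2}$, so combining the lower bound $|\mathrm{Disc}|\ge 1$ with an upper bound on $|\mathrm{Disc}|$ is circular and says nothing about $S$. The mechanism that actually removes the $d^{2}$ is Mahler's: write $\sqrt{|\mathrm{Disc}(P)|}=|a|^{d-1}\,|\det V|$ with $V$ the Vandermonde matrix of the roots, replace the row of $z_{i^{*}}$ by its difference with the row of $z_{j^{*}}$ (leaving the determinant unchanged), pull the factor $|z_{i^{*}}-z_{j^{*}}|$ out of that row, and only then apply Hadamard's inequality; the row norms then contribute $d^{O(d)}\prod_{k}\max(1,|z_k|)^{d-1}$ instead of $2^{\binom{d}{2}}\prod_{k}\max(1,|z_k|)^{d-1}$, which is where $d\log d$ comes from. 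A second, smaller leak is in your squarefree reduction: if you feed the coefficient bitlength of $Q=P/\gcd(P,P')$ (which is $O(\tau+d)$ by Mignotte, not really $O(\tau+d\log d)$) back into the separation bound, you again acquire a $d^{2}$-type term; the clean fix is to phrase the squarefree bound in terms of the Mahler measure and use $M(Q)\le M(P)$, valid because $Q$ divides $P$ in $\mathbb{Z}[x]$ and every nonzero integer polynomial has Mahler measure at least $1$.

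That said, the weaker bound $2^{-O(d^{2}+\tau d)}$ that your concrete steps do establish already suffices for the ``in particular'' clause and for every use of Theorem~\ref{sepbound} in this paper: there $d\le 2^{\poly(n)}$ and $\tau\le\poly(n)$, so even $2^{-O(d^{2}+\tau d)}=2^{-2^{\poly(n)}}$. The gap therefore concerns matching the advertised exponent $O(d\log d+\tau d)$, not the application; to claim the theorem as stated you need either Mahler's differenced-row Hadamard argument or the subdiscriminant proof in the cited reference.
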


We will need to locate the zeros of univariate polynomials to high precision
using a small amount of memory. \ Fortunately, a beautiful algorithm of
Neff~\cite{neff} from the 1990s\ (improved by Neff and Reif~\cite{neffreif}
and by Pan~\cite{pan}) provides exactly what we need.

\begin{theorem}
[\cite{neff, neffreif, pan}]\label{findroots}There exists an algorithm that

\begin{enumerate}
\item[(i)] Takes as input a triple $(P,i,j)$, where $P$ is a degree-$d$
univariate polynomial with rational\footnote{Neff's original algorithm assumes
polynomials with \textit{integer} coefficients; the result for rational
coefficients follows easily by clearing denominators.} coefficients whose
numerators and denominators are bounded in absolute value by $2^{m}$.

\item[(ii)] Outputs the $i^{th}$ most significant bits of the real and
imaginary parts of the binary expansion of the $j^{th}$ zero of $P$ (in some
order independent of $i$, possibly with repetitions).

\item[(iii)] Uses $O\left(  \operatorname*{polylog}\left(  d+i+m\right)
\right)  $ space.
\end{enumerate}
\end{theorem}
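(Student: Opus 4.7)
The plan is to invoke Neff's splitting-circle method together with the Borodin--Cook--Pippenger simulation of uniform NC circuits in polylogarithmic space. First, by clearing denominators, I reduce to a polynomial with integer coefficients of bitlength $O(m+\log d)$, and use a Cauchy bound to enclose every root of $P$ in a disk $|z|\leq R$ with $\log R = O(m)$.

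Next comes the recursive core. Given a disk $D$ known to contain $k$ roots, find a circle $C\subset D$ on which neither $P$ nor $P'$ nearly vanishes and whose interior and exterior each enclose between $k/3$ and $2k/3$ roots of $P$. Test candidate circles via the argument principle by numerically approximating $\tfrac{1}{2\pi i}\oint_C P'(z)/P(z)\,dz$. Existence of a good $C$, and an efficient search for one, follow from $s=O(\log d)$ rounds of the Graeffe iteration $P(z)\mapsto\prod_j(z-z_j^{2^s})$, which exponentially amplifies gaps between distinct root magnitudes. Once $C$ is fixed, apply Sch\"onhage's splitting theorem (sharpened by Pan) to obtain an approximate factorization $P\approx F\cdot G$, where $F$ carries the interior roots and $G$ the exterior ones, each of degree at most $2k/3$. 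Recursing $O(\log d)$ times isolates each root inside a tiny disk, after which Newton iteration refines it to the required $i$ bits. A canonical ordering of the approximate roots (say lexicographic in their real and imaginary parts) makes ``the $j$-th root'' well-defined and stable.

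For the space bound, each subroutine -- FFT-based polynomial multiplication and division, numerical contour integration, Newton iteration, and sorting -- is a standard NC computation of depth $\operatorname{polylog}(d+m+i)$ and size $\poly(d+m+i)$. Borodin's simulation of logspace-uniform NC circuits in $\mathsf{DSPACE}(\operatorname{polylog})$ then yields the $\operatorname{polylog}(d+m+i)$-space algorithm required by the theorem.

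The hard part is numerical stability of the approximate factorizations: the $O(\log d)$ rounds of splitting can propagate error, and we must certify the final roots to precision roughly $2^{-(i+\operatorname{polylog})}$. Neff handles this by carrying all arithmetic at precision $\poly(d+m+i)$ and deriving a priori error bounds tied to the minimum root separation, which by Theorem~\ref{sepbound} is at worst $2^{-\poly(m+d)}$. Keeping that precision controlled while remaining in polylog space (rather than polynomial space) is the delicate point; the remainder is assembling standard NC primitives.
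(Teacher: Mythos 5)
The paper does not actually prove this theorem: it is imported wholesale from \cite{neff, neffreif, pan}, together with the remark that any $\mathsf{NC}$ algorithm can be converted to a polylog-space one by Borodin's standard reduction \cite{borodin} --- which is exactly the route you sketch (specified-precision root approximation in $\mathsf{NC}$ via splitting-circle/Graeffe machinery, strictly speaking the Sch\"onhage--Pan line, Neff's original $\mathsf{NC}$ algorithm being somewhat different, followed by the parallel-to-space simulation). So your outline matches the paper's treatment in approach, with the genuinely hard precision-certification and uniformity arguments deferred to the cited works, just as the paper defers them by citation.
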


We will also need to invert $n\times n$\ matrices using
$\operatorname*{polylog}\left(  n\right)  $\ space. \ We can do so using an
algorithm of Borodin, Cook, and Pippenger~\cite{bcp} (which was also used for
a similar application by Aaronson and Watrous \cite{awat}).

\begin{theorem}
[{Borodin et al.\ \cite[Corollary 4.4]{bcp}}]\label{bcpthm}There exists an
algorithm that

\begin{enumerate}
\item[(i)] Takes as input an $n\times n$\ matrix $A=A\left(  p\right)  $,
whose entries are rational functions in $p$ of degree $\operatorname*{poly}%
\left(  n\right)  $, with the coefficients specified to $\operatorname*{poly}%
\left(  n\right)  $\ bits of precision.

\item[(ii)] Computes $\det\left(  A\right)  $ (and as a consequence, also the
$\left(  i,j\right)  $\ entry of $A^{-1}$ for any given coordinates $\left(
i,j\right)  $,\ assuming that $A$\ is invertible).

\item[(iii)] Uses $\operatorname*{poly}\left(  n\right)  $\ time and
$\operatorname*{polylog}\left(  n\right)  $\ space.
\end{enumerate}
\end{theorem}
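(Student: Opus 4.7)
The plan is to reduce to the scalar version of the Borodin--Cook--Pippenger determinant algorithm via an evaluation-interpolation strategy in the parameter $p$. The scalar version (the original content of BCP) handles $n\times n$ matrices whose entries are rationals of $\operatorname*{poly}(n)$ bitlength in $\operatorname*{polylog}(n)$ space and $\operatorname*{poly}(n)$ time, and the task here is to bootstrap that to matrices whose entries are rational functions in $p$ of $\operatorname*{poly}(n)$ degree.

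First, I would clear denominators: for each row $i$ of $A(p)$, let $Q_{i}(p)$ be the product of the denominators of its entries. Multiplying row $i$ by $Q_{i}(p)$ produces a polynomial matrix $B(p)$ whose entries lie in $\mathbb{Z}[p]$, have degree $\operatorname*{poly}(n)$, and have $\operatorname*{poly}(n)$-bit coefficients, with $\det(A(p)) = \det(B(p)) / \prod_{i} Q_{i}(p)$. The denominator here is a polynomial of degree $\operatorname*{poly}(n)$ whose individual coefficients are computable on demand in $\operatorname*{polylog}(n)$ space.

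Next, $\det(B(p))$ is a polynomial in $p$ of degree at most $N := \operatorname*{poly}(n)$. I would fix $N+1$ distinct integer sample points $p_{0},\ldots,p_{N}$. For each $k$, the scalar matrix $B(p_{k})$ has rational entries of $\operatorname*{poly}(n)$ bitlength, so its determinant can be computed in $\operatorname*{polylog}(n)$ space via the scalar BCP algorithm. Then Lagrange interpolation recovers each coefficient (or, on demand, each bit of each coefficient) of $\det(B(p))$, and division by $\prod_{i} Q_{i}(p)$ yields $\det(A(p))$. For entries of $A^{-1}$, Cramer's rule reduces the problem to two determinant computations, one on $A$ and one on the relevant $(n-1)\times(n-1)$ minor, both handled the same way.

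The main obstacle is maintaining the $\operatorname*{polylog}(n)$ space bound through composition: even the intermediate list of $N+1$ scalar determinants has total bitlength $\operatorname*{poly}(n)$, far exceeding $\operatorname*{polylog}(n)$. The fix is to exploit the fact that both scalar determinant (via the Csanky-style algorithm underlying BCP) and Lagrange interpolation are in $\mathsf{NC}$, hence each output bit is computable in $\operatorname*{polylog}(n)$ space by the standard $\mathsf{NC} \subseteq \mathsf{polyL}$ simulation; the composition then recomputes each intermediate quantity on demand rather than storing it. Because polynomial arithmetic on $\operatorname*{poly}(n)$-degree polynomials is itself in $\mathsf{NC}$, this composition goes through without additional cost, giving the claimed space bound.
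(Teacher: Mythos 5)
First, a point of orientation: the paper does not prove this statement at all. It is imported wholesale from Borodin--Cook--Pippenger (their Corollary 4.4, which already covers matrices whose entries are polynomials/rational functions in a parameter, not just scalar matrices), together with the remark at the end of Section 2.6 that the cited $\mathsf{NC}$ algorithms can be converted to space-efficient ones by Borodin's standard reduction. So there is no in-paper proof to match; your argument is a reconstruction from the scalar case, and its skeleton (clear denominators, evaluate $\det(B(p))$ at $\operatorname{poly}(n)$ integer points, interpolate, recompute intermediate values on demand to keep the space down) is reasonable. The space accounting is fine: composing space-bounded computations with on-demand recomputation costs only the sum of the spaces plus the logarithm of the intermediate output lengths, which here is $O(\log n)$, and the bit-sizes you track (degrees and coefficient lengths of $Q_i$, $B$, and $\det B$) are all $\operatorname{poly}(n)$ as you say.

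The genuine gap is part (iii)'s simultaneous $\operatorname{poly}(n)$-time claim, which your argument never establishes. The tool you invoke for the inner determinant --- ``each output bit is computable in $\operatorname{polylog}(n)$ space by the standard $\mathsf{NC}\subseteq\mathsf{polyL}$ simulation'' --- gives space only: depth-first re-evaluation of a polylog-depth, polynomial-size circuit takes quasipolynomial time in general, and whether $\mathsf{NC}$ is contained in $\mathsf{SC}$ (simultaneous polynomial time and polylogarithmic space) is open. Obtaining the determinant in simultaneously polynomial time and polylogarithmic space is precisely the nontrivial content of the BCP ``well-endowed rings'' machinery, so your proof as written quietly assumes the hardest part of the theorem it is reconstructing. (If you grant the scalar BCP result with its simultaneous time--space bound, then your outer evaluation/interpolation layer can indeed be carried out in logspace --- iterated integer multiplication and division are in logspace, hence polynomial time --- and the composition keeps both bounds; but this needs to be said, and it needs the cited scalar result rather than the generic $\mathsf{NC}$-to-space simulation.) In the context of this paper the lapse is harmless, since the advice-coin machines have no time bound and only the space bound of Theorem~\ref{bcpthm} is ever used, but as a proof of the statement as written it is incomplete.
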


Note that the algorithms of~\cite{bcp, neff, neffreif, pan} are all stated as
\textit{parallel} ($\mathsf{NC}$) algorithms. \ However, any parallel
algorithm can be converted into a space-efficient algorithm, using a standard
reduction due to Borodin~\cite{borodin}.

\section{Quantum Mechanics Nullifies the Hellman-Cover Theorem\label{NULLIFY}}

We now show that the quantum analogue of the Hellman-Cover Theorem
(Theorem~\ref{hcthm}) is false. \ Indeed, we will show that for any fixed
$\varepsilon>0$, there exists a quantum finite automaton with only $2$ states
that can distinguish a coin with bias $1/2$\ from a coin with bias
$1/2+\varepsilon$, with bounded probability of error independent of
$\varepsilon$. \ Furthermore, this automaton is even a \textit{halting}
automaton, which halts with probability $1$ and enters either an $\left\vert
\operatorname*{Accept}\right\rangle $\ or a $\left\vert \operatorname*{Reject}%
\right\rangle $\ state.

The key idea is that, in this setting, a single qubit can be used as an
\textquotedblleft analog counter,\textquotedblright\ in a way that a classical
probabilistic bit cannot. \ Admittedly, our result would fail were the qubit
subject to noise or decoherence, as it would be in a realistic physical situation.

Let $\rho_{0}$ be the designated starting state of the automaton, and let
$\rho_{1},\rho_{2},\ldots,$ be defined as $\rho_{t+1}=\mathcal{E}_{p}\rho_{t}%
$, with notation as in Section~\ref{coinsubsec}. \ Let
\[
a\left(  p\right)  :=\lim_{n\rightarrow\infty}\left\langle
\operatorname*{Accept}\right\vert \mathcal{E}_{p}^{n}\left(  \rho_{0}\right)
\left\vert \operatorname*{Accept}\right\rangle
\]
be the limiting probability of acceptance. This limit exists, as argued in
Section~\ref{coinsubsec}.%
\begin{figure}
[ptb]
\begin{center}
\includegraphics[
trim=1.575158in 4.154967in 2.436372in 0.000000in,
height=2.834in,
width=4.6925in
]%
{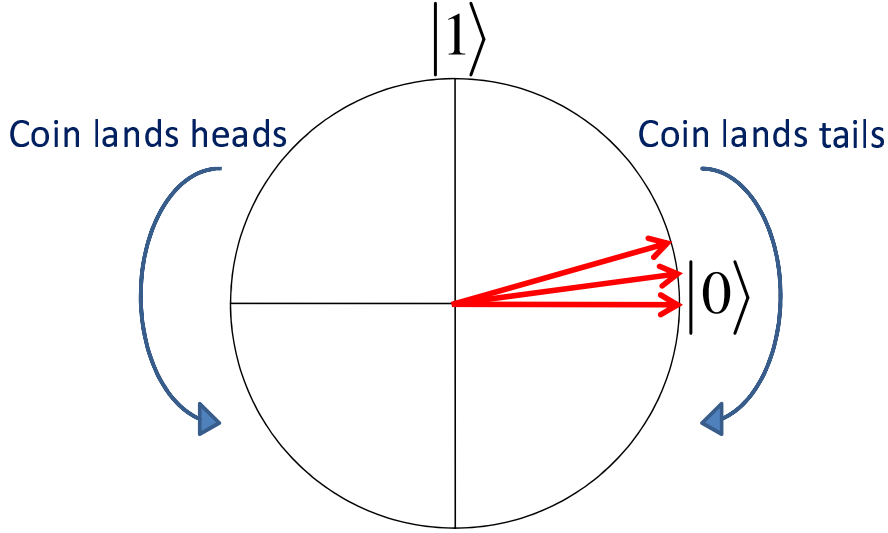}%
\caption{A quantum finite automaton that distinguishes a $p=1/2$\ coin from a
$p=1/2+\varepsilon$\ coin, essentially by using a qubit as an analog counter.}%
\label{qfafig}%
\end{center}
\end{figure}

We now prove Theorem \ref{qmnhc}, which we restate for convenience.

\begin{quotation}
\noindent\textit{Fix }$p\in\left[  0,1\right]  $\textit{ and }$\varepsilon
>0$\textit{. \ Then there exists a quantum finite automaton }$M$\textit{ with
two\ states (not counting the }$\left\vert \operatorname*{Accept}\right\rangle
$\textit{\ and }$\left\vert \operatorname*{Reject}\right\rangle $%
\textit{\ states), such that }$a\left(  p+\varepsilon\right)  -a\left(
p\right)  \geq\beta$\textit{ for some constant }$\beta$\textit{\ independent
of }$\varepsilon$\textit{. \ (For example, }$\beta=0.0117$\textit{\ works.)}
\end{quotation}

\begin{proof}
[Proof of Theorem \ref{qmnhc}]The state of $M$ will belong to the Hilbert
space spanned by $\left\{  \left\vert 0\right\rangle ,\left\vert
1\right\rangle ,\left\vert \operatorname*{Accept}\right\rangle ,\left\vert
\operatorname*{Reject}\right\rangle \right\}  $. \ The initial state is
$\left\vert 0\right\rangle $. \ Let%
\[
U\left(  \theta\right)  :=\left(
\begin{array}
[c]{cc}%
\cos\theta & -\sin\theta\\
\sin\theta & \cos\theta
\end{array}
\right)
\]
be a unitary transformation that rotates counterclockwise by $\theta$, in the
\textquotedblleft counter subspace\textquotedblright\ spanned by $\left\vert
0\right\rangle $\ and $\left\vert 1\right\rangle $. \ Also, let $A$\ and $B$
be positive integers to be specified later. \ Then the finite automaton $M$
runs the following procedure:

\begin{enumerate}
\item[(1)] If a $1$ bit is encountered (i.e., the coin lands heads), apply
$U\left(  \varepsilon\left(  1-p\right)  /A\right)  $.

\item[(2)] If a $0$ bit is encountered (i.e., the coin lands tails), apply
$U\left(  -\varepsilon p/A\right)  $.

\item[(3)] With probability $\alpha:=\varepsilon^{2}/B$, \textquotedblleft
measure\textquotedblright\ (that is, move all probability mass in $\left\vert
0\right\rangle $\ to $\left\vert \operatorname*{Reject}\right\rangle $\ and
all probability mass in $\left\vert 1\right\rangle $\ to $\left\vert
\operatorname*{Accept}\right\rangle $); otherwise do nothing.
\end{enumerate}

We now analyze the behavior of $M$. \ For simplicity, let us first consider
steps (1) and (2) only. \ In this case, we can think of $M$ as taking a random
walk in the space of possible angles between $\left\vert 0\right\rangle $\ and
$\left\vert 1\right\rangle $. \ In particular, after $t $ steps, $M$'s state
will have the form $\cos\theta_{t}\left\vert 0\right\rangle +\sin\theta
_{t}\left\vert 1\right\rangle $, for some angle $\theta_{t}\in\mathbb{R}$.
\ (As we follow the walk, we simply let $\theta_{t} $\ increase or decrease
without bound, rather than confining it to a range of size $2\pi$.) \ Suppose
the coin's bias is $p$. \ Then after $t $ steps,%
\[
\mathbb{E}\left[  \theta_{t}\right]  =pt\cdot\frac{\varepsilon}{A}\left(
1-p\right)  +\left(  1-p\right)  t\cdot\left(  -\frac{\varepsilon}{A}p\right)
=0.
\]
On the other hand, suppose the bias is $q=p+\varepsilon$. \ Then%
\begin{align*}
\mathbb{E}\left[  \theta_{t}\right]   &  =qt\cdot\frac{\varepsilon}{A}\left(
1-p\right)  +\left(  1-q\right)  t\cdot\left(  -\frac{\varepsilon}{A}p\right)
\\
&  =\frac{\varepsilon}{A}t\cdot\left[  q\left(  1-p\right)  -p\left(
1-q\right)  \right] \\
&  =\frac{\varepsilon^{2}t}{A}.
\end{align*}
So in particular, if $t=K/\varepsilon^{2}$ for some constant $K$, then
$\mathbb{E}\left[  \theta_{t}\right]  =K/A$. However, we also need to
understand the variance of the angle, $\operatorname*{Var}\left[  \theta
_{t}\right]  $. \ If the bias is $p$, then by the independence of the coin
flips,%
\begin{align*}
\operatorname*{Var}\left[  \theta_{t}\right]   &  =t\cdot\operatorname*{Var}%
\left[  \theta_{1}\right] \\
&  =t\cdot\left[  p\left(  \frac{\varepsilon}{A}\left(  1-p\right)  \right)
^{2}+\left(  1-p\right)  \left(  \frac{\varepsilon}{A}p\right)  ^{2}\right] \\
&  \leq\frac{\varepsilon^{2}t}{A^{2}},
\end{align*}
and likewise if the bias is $q=p+\varepsilon$. \ If $t=K/\varepsilon^{2}$,
this implies that $\operatorname*{Var}\left[  \theta_{t}\right]  \leq K/A^{2}%
$\ in both cases. We now incorporate step (3). \ Let $T$\ be the number of
steps before $M$\ halts (that is, before its state gets measured). \ Then
clearly $\Pr\left[  T=t\right]  =\alpha\left(  1-\alpha\right)  ^{t}$. \ Also,
let $u:=K/\varepsilon^{2}$ for some $K$ to be specified later. \ Then if the
bias is $p$, we can upper-bound $M$'s acceptance probability $a\left(
p\right)  $\ as%
\begin{align*}
a\left(  p\right)   &  =\sum_{t=1}^{\infty}\Pr\left[  T=t\right]
\cdot\mathbb{E}\left[  \sin^{2}\theta_{t}~|~t\right] \\
&  \leq\Pr\left[  T>u\right]  +\sum_{t=1}^{u}\Pr\left[  T=t\right]
\cdot\mathbb{E}\left[  \sin^{2}\theta_{t}~|~t\right] \\
&  \leq\Pr\left[  T>u\right]  +\sum_{t=1}^{u}\Pr\left[  T=t\right]
\cdot\mathbb{E}\left[  \theta_{t}^{2}~|~t\right] \\
&  \leq\left(  1-\alpha\right)  ^{u}+\mathbb{E}\left[  \theta_{u}%
^{2}~|~u\right] \\
&  \leq\left(  1-\frac{\varepsilon^{2}}{B}\right)  ^{B/\varepsilon^{2}\cdot
K/B}+\frac{\varepsilon^{2}u}{A^{2}}\\
&  \leq e^{-K/B}+\frac{K}{A^{2}}.
\end{align*}
Here the third line uses $\sin x\leq x$, while the fourth line uses the fact
that $\mathbb{E}\left[  \theta_{t}^{2}\right]  $\ is nondecreasing for an
unbiased random walk. \ So long as $A^{2}\geq B$, we can minimize the final
expression by setting $K:=B\ln\left(  A^{2}/B\right)  $, in which case we have%
\[
a\left(  p\right)  \leq\frac{B}{A^{2}}\left(  1+\ln\left(  \frac{A^{2}}{B}
\right)  \right)  .
\]
On the other hand, suppose the bias is $p+\varepsilon$. \ Set
$v:=L/\varepsilon^{2}$ where $L:=\pi A/4$.\ \ Then for all $t\leq v$, we have%
\begin{align*}
\Pr\left[  \left\vert \theta_{t}\right\vert >\pi/2~|~t\right]   &  \leq
\Pr\left[  \left\vert \theta_{t}-\frac{\varepsilon^{2}t}{A}\right\vert
>\frac{\pi}{2}-\frac{\varepsilon^{2}t}{A}~|~t\right] \\
&  <\frac{\varepsilon^{2}t/A^{2}}{\left(  \pi/2-\varepsilon^{2}t/A\right)
^{2}}\\
&  \leq\frac{\varepsilon^{2}v/A^{2}}{\left(  \pi/2-\varepsilon^{2}v/A\right)
^{2}}\\
&  =\frac{4}{\pi A}%
\end{align*}
where the second line uses Chebyshev's inequality. \ Also, let $\Delta
_{t}:=\theta_{t}-\varepsilon^{2}t/A$. \ Then for all $t\leq v$\ we have%
\begin{align*}
\mathbb{E}\left[  \theta_{t}^{2}~|~t\right]   &  =\mathbb{E}\left[  \left(
\frac{\varepsilon^{2}t}{A}+\Delta_{t}\right)  ^{2}~|~t\right] \\
&  =\frac{\varepsilon^{4}t^{2}}{A^{2}}+\mathbb{E}\left[  \Delta_{t}%
^{2}~|~t\right]  +2\frac{\varepsilon^{2}t}{A}\mathbb{E}\left[  \Delta
_{t}~|~t\right] \\
&  \geq\frac{\varepsilon^{4}t^{2}}{A^{2}}.
\end{align*}
Putting the pieces together, we can lower-bound $a\left(  p+\varepsilon
\right)  $\ as%
\begin{align*}
a\left(  p+\varepsilon\right)   &  =\sum_{t=1}^{\infty}\Pr\left[  T=t\right]
\cdot\mathbb{E}\left[  \sin^{2}\theta_{t}~|~t\right] \\
&  \geq\sum_{t=1}^{v}\Pr\left[  T=t\right]  \cdot\mathbb{E}\left[  \sin
^{2}\theta_{t}~|~t\right] \\
&  \geq\sum_{t=1}^{v}\Pr\left[  T=t\right]  \cdot\Pr\left[  \left\vert
\theta_{t}\right\vert \leq\pi/2~|~t\right]  \cdot\mathbb{E}\left[  \theta
_{t}^{2}/3~|~t\right] \\
&  \geq\sum_{t=1}^{v}\alpha\left(  1-\alpha\right)  ^{t}\cdot\left(
1-\frac{4}{\pi A}\right)  \cdot\frac{\varepsilon^{4}t^{2}}{3A^{2}}\\
&  =\left(  1-\frac{4}{\pi A}\right)  \frac{\varepsilon^{4}\alpha}{3A^{2}}%
\sum_{t=1}^{L/\varepsilon^{2}}\left(  1-\frac{\varepsilon^{2}}{B}\right)
^{t}t^{2}\\
&  \geq\left(  1-\frac{4}{\pi A}\right)  \frac{\varepsilon^{4}\alpha}%
{3A^{2}e^{L/B}}\sum_{t=1}^{L/\varepsilon^{2}}t^{2}\\
&  \geq\left(  1-\frac{4}{\pi A}\right)  \frac{\varepsilon^{6}}{3A^{2}%
Be^{L/B}}\cdot\frac{\left(  L/\varepsilon^{2}\right)  ^{3}}{6}\\
&  =\left(  1-\frac{4}{\pi A}\right)  \frac{L^{3}}{18A^{2}Be^{L/B}}\\
&  =\left(  1-\frac{4}{\pi A}\right)  \frac{\pi^{3}A}{1152Be^{\pi A/4B}}.
\end{align*}
Here the third line uses the fact that $\sin^{2}x\geq x^{2}/3$ for all
$\left\vert x\right\vert \leq\pi/2$. \ If we now choose (for example) $A=10000
$\ and $B=7500$, then we have $a\left(  p\right)  \leq0.0008$ and $a\left(
p+\varepsilon\right)  \geq0.0125$, whence $a\left(  p+\varepsilon\right)
-a\left(  p\right)  \geq0.0117.$
\end{proof}

We can strengthen Theorem \ref{qmnhc}\ to ensure that $a\left(  p\right)
\leq\delta$\ and $a\left(  p+\varepsilon\right)  \geq1-\delta$ for any desired
error probability $\delta>0$. \ We simply use standard amplification, which
increases the number of states in $M$ to $O\left(  \operatorname*{poly}\left(
1/\delta\right)  \right)  $\ (or equivalently, the number of qubits to
$O\left(  \log(1/\delta)\right)  $).

\section{Upper-Bounding the Power of Advice Coins\label{MAINRESULT}}

In this section we prove Theorem~\ref{mainthm}, that $\mathsf{BQPSPACE/coin}%
=\mathsf{BPPSPACE/coin}=\mathsf{PSPACE/poly}$. \ We start with the easy half:

\begin{proposition}
$\mathsf{PSPACE/poly}\subseteq\mathsf{BPPSPACE/coin} \subseteq
\mathsf{BQPSPACE/coin}$.
\end{proposition}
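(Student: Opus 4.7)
The plan is to handle the two containments separately, both of which are straightforward in spirit. The inclusion $\mathsf{BPPSPACE/coin} \subseteq \mathsf{BQPSPACE/coin}$ is essentially trivial: any classical probabilistic algorithm with access to $\$_{p_n}$ can be regarded as a quantum algorithm that keeps its density matrix diagonal in the computational basis, with coin flips implemented as $\mathcal{E}_p = p\mathcal{E}_1 + (1-p)\mathcal{E}_0$ acting on classical-basis states. The halting/acceptance conventions for $\mathsf{BPPSPACE/coin}$ and $\mathsf{BQPSPACE/coin}$ defined in Section~\ref{coinsubsec} match under this embedding, so I would just observe this in one line.

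The interesting direction is $\mathsf{PSPACE/poly} \subseteq \mathsf{BPPSPACE/coin}$. The idea is to encode the polynomial-length advice string in the bias of the coin and then \emph{recover} it by estimating the bias through empirical sampling. Let $L \in \mathsf{PSPACE/poly}$ be decided by a $\mathsf{PSPACE}$ machine $M'$ with advice $\{w_n\}$, $|w_n|=q(n)$ for some polynomial $q$. I would define
\[
p_n := 2^{-q(n)-1} + \sum_{i=1}^{q(n)} w_{n,i}\, 2^{-i},
\]
so that $p_n$ is a dyadic rational of bit-length $q(n)+1$, and distinct advice strings yield biases separated by at least $2^{-q(n)}$; moreover, rounding $p_n$ down to $q(n)$ bits recovers $w_n$ exactly.

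The $\mathsf{BPPSPACE/coin}$ machine, on input $x$ of length $n$, proceeds as follows: flip $\$_{p_n}$ a total of $N := 2^{cq(n)}$ times for a sufficiently large constant $c$, maintain a counter of the number of heads (which fits in $\log N = O(q(n))$ bits of space), then form the empirical frequency $\hat p$ stored to $O(q(n))$ bits of precision. A Chernoff bound gives $\lvert \hat p - p_n\rvert \le 2^{-q(n)-3}$ except with probability $2^{-\Omega(n)}$; rounding $\hat p$ to the nearest admissible dyadic rational of the above form recovers $w_n$ exactly with the same probability. The machine then simulates $M'(x, w_n)$, which by hypothesis uses polynomial space, and outputs its answer. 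Standard majority amplification over a few independent repetitions of the estimate-and-simulate procedure pushes the overall error below $1/3$.

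Space usage is clearly polynomial throughout (counter, empirical bias, and the simulation of $M'$ each take $\poly(n)$ bits), and the procedure uses only the coin as its source of randomness, as required. There is really no main obstacle here: the encoding-and-decoding scheme is the whole idea, and the only mild point to check is that the per-flip space cost stays polynomial, which it does because the counter and empirical-frequency register have length $O(q(n))$.
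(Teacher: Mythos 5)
Your proposal is correct and follows essentially the same route as the paper: encode the advice string $w_n$ into the binary expansion of the bias $p_n$, flip the coin $2^{O(q(n))}$ times while tallying heads in a $\poly(n)$-bit counter, round the empirical frequency to recover $w_n$, and then simulate the $\mathsf{PSPACE/poly}$ machine (the paper leaves the trivial containment $\mathsf{BPPSPACE/coin}\subseteq\mathsf{BQPSPACE/coin}$ implicit, exactly as you describe it). Your additive offset $2^{-q(n)-1}$ and the explicit Chernoff calculation are harmless refinements of the same argument.
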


\begin{proof}
Given a polynomial-size advice string $w_{n}\in\left\{  0,1\right\}
^{s\left(  n\right)  }$, we encode $w_{n}$\ into the first $s\left(  n\right)
$\ bits of the binary expansion of an advice bias $p_{n}\in\left[  0,1\right]
$. \ Then by flipping the coin $\$_{p_{n}}$\ sufficiently many times
($O\left(  2^{2s\left(  n\right)  }\right)  $ trials suffice) and tallying the
fraction of heads, a Turing machine can recover $w_{n}$\ with high success
probability. \ Counting out the desired number of trials and determining the
fraction of heads seen can be done in space $O\left(  \log\left(  2^{2s\left(
n\right)  }\right)  \right)  =O\left(  s\left(  n\right)  \right)  =O\left(
\operatorname*{poly}\left(  n\right)  \right)  $. \ Thus we can simulate a
$\mathsf{PSPACE/poly}$ machine with a $\mathsf{BPPSPACE/coin}$ machine.
\end{proof}

The rest of the section is devoted to showing that $\mathsf{BQPSPACE/coin}%
\subseteq\mathsf{PSPACE/poly}$. First we give some lemmas about quantum
polynomial-space advice coin algorithms. Let $M$ be such an algorithm.
\ Suppose $M$ uses $s\left(  n\right)  =\operatorname*{poly}\left(  n\right)
$\ qubits of memory, and has $S=2^{s\left(  n\right)  }$\ states. \ Let
$\mathcal{E}_{0},\mathcal{E}_{1},\mathcal{E}_{p}$ be the superoperators for
$M$ as described in Section~\ref{coinsubsec}. \ Recalling the vectorized
notation from Section~\ref{coinsubsec}, let $B_{p}:=\operatorname*{mat}\left(
\mathcal{E}_{p}\right)  $. \ Let $\rho_{x,t}\left(  p\right)  $ be the state
of $M$ after $t$ coin flips steps on input $x$ and coin bias $p$, and let
$v_{x,t}\left(  p\right)  :=\operatorname*{vec}\left(  \rho_{x,t}\left(
p\right)  \right)  $. \ Let%
\[
a_{x,t}\left(  p\right)  :=v_{\operatorname*{Acc}}^{\dag}v_{x,t}\left(
p\right)
\]
be the probability that $M$ is in the $\left\vert \operatorname*{Accept}%
\right\rangle $ state, if measured after $t$ steps. \ Let $a_{x}\left(
p\right)  :=\lim_{t\rightarrow\infty}a_{x,t}\left(  p\right)  $. \ As
discussed in Section~\ref{coinsubsec}, the quantities $a_{x,t}\left(
p\right)  $ are nondecreasing in $t$, so the limit $a_{x}\left(  p\right)  $
is well-defined.

We now show that---except possibly at a finite number of values---$a_{x}%
\left(  p\right)  $\ is actually a \textit{rational function} of $p$, whose
degree is at most the number of states.

\begin{lemma}
\label{rational}There exist polynomials $Q(p)$ and $R(p)\neq0$, of degree at
most $S^{2}=2^{\operatorname*{poly}\left(  n\right)  }$ in $p$, such that
\[
a_{x}\left(  p\right)  =\frac{Q\left(  p\right)  }{R\left(  p\right)  }%
\]
holds whenever $R\left(  p\right)  \neq0$. \ Moreover, $Q$ and $R$ have
rational coefficients that are computable in $\operatorname*{poly}\left(
n\right)  $ space given $x\in\{0,1\}^{n}$ and the index of the desired coefficient.
\end{lemma}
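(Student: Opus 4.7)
My plan is to express $a_x(p)$ as an algebraic limit involving the resolvent of $B_p$ and then extract its rational form in $p$ via Proposition~\ref{lhopital}. Setting $v_0 := \operatorname*{vec}(\ketbra{0^{q(n)}}{0^{q(n)}})$, we have $a_{x,t}(p) = v_{\operatorname*{Acc}}^{\dag} B_p^{t} v_{0}$. Since the sequence $\{a_{x,t}(p)\}$ is bounded in $[0,1]$ and nondecreasing with limit $a_x(p)$, Abel's theorem yields
\[
a_x(p) \;=\; \lim_{z \to 1^-} (1-z)\sum_{t=0}^{\infty} z^{t} a_{x,t}(p) \;=\; \lim_{z \to 1^-} (1-z)\, v_{\operatorname*{Acc}}^{\dag}(I - zB_p)^{-1} v_{0},
\]
the geometric series converging for $|z|<1$ because $B_p$ has spectral radius at most $1$.

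Next I would apply Cramer's rule to the $S^2 \times S^2$ matrix $I - zB_p$ to write the inner product as $N(z,p)/D(z,p)$, where $D(z,p) := \det(I - zB_p)$ and $N(z,p)$ is the corresponding cofactor expression. Since $B_p = pB_1 + (1-p)B_0$ is linear in $p$, both $N$ and $D$ are polynomials of degree at most $S^2$ in $p$ and at most $S^2$ in $z$. Re-expanding $(1-z)N(z,p)$ and $D(z,p)$ as polynomials in $w := 1-z$, write them as $\sum_{k} c_{k}(p)\,w^{k}$ and $\sum_{k} d_{k}(p)\,w^{k}$, where each $c_{k}(p), d_{k}(p)$ has degree at most $S^{2}$ in $p$. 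Because $\ketbra{\operatorname*{Accept}}{\operatorname*{Accept}}$ is a fixed point of $\mathcal{E}_p$ for every $p$, the value $1$ is always an eigenvalue of $B_p$, so $d_{0}(p) = \det(I-B_p) \equiv 0$ and both the numerator and the denominator vanish at $w=0$. The existence of the finite limit $a_x(p)$ then lets me invoke Proposition~\ref{lhopital} to obtain $a_x(p) = c_{k^{\ast}}(p)/d_{k^{\ast}}(p)$, where $k^{\ast}$ is the smallest index with $d_{k^{\ast}}(\cdot)$ not identically zero as a polynomial in $p$. Setting $Q := c_{k^{\ast}}$ and $R := d_{k^{\ast}}$ yields the claimed rational expression of degree at most $S^{2} = 2^{\operatorname*{poly}(n)}$, valid whenever $R(p)\ne 0$: at any such $p$ the vanishing of $d_0,\ldots,d_{k^\ast-1}$ is automatic since those coefficients are identically zero.

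For the computability claim, each $c_k(p), d_k(p)$ is, up to a factorial, the $k$-th Taylor coefficient at $z=1$ of a determinant or cofactor of $I - zB_p$, and is therefore itself a polynomial-sized determinant of a matrix whose entries lie in the polynomial ring $\mathbb{Q}[p]$. By Theorem~\ref{bcpthm} (Borodin--Cook--Pippenger), any specific coefficient in $p$ of $Q$ or $R$ can be computed in $\operatorname*{polylog}(S^{2}) = \operatorname*{poly}(n)$ space given its index. Identifying $k^{\ast}$ reduces to scanning $k = 0, 1, \ldots, S^2$ and testing each $d_k$ for identical vanishing by evaluating it at sufficiently many sample $p$'s. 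The main obstacle I anticipate is coordinating this symbolic treatment of $p$ with BCP: although $D(z,p)$ has total degree only $O(S^2)$, it carries $2^{\operatorname*{poly}(n)}$ monomials, so we must extract a single coefficient of $Q(p)$ or $R(p)$ without ever storing the full polynomials. The key is to treat $p$ as a formal indeterminate throughout the BCP determinant computation, which is precisely the symbolic-rational-coefficient setting that Theorem~\ref{bcpthm} is stated to handle.
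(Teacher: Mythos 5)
Your proposal is correct and follows essentially the same route as the paper: the paper works with $\Lambda_{z,p}=z[I-(1-z)B_p]^{-1}$ as $z\to 0$ (your expression after the substitution $w=1-z$), applies Cramer's rule to get bivariate polynomials of degree at most $S^2$, invokes Proposition~\ref{lhopital} to identify $Q=c_{k^\ast}$, $R=d_{k^\ast}$, justifies the Abel-type limit by the same weighted-average argument, and cites Theorem~\ref{bcpthm} for space-efficient coefficient computation. The only differences are cosmetic: you work with the scalar $v_{\operatorname*{Acc}}^{\dag}(I-zB_p)^{-1}v_0$ rather than entrywise with the matrix limit, and you bypass the Aaronson--Watrous fixed-point interpretation, which the paper mentions but does not actually need for this lemma.
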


\begin{proof}
Throughout, we suppress the dependence on $x$ for convenience, so that
$a\left(  p\right)  =\lim_{t\rightarrow\infty}a_{t}\left(  p\right)  $\ is
simply the limiting acceptance probability of a finite automaton $M\left(
\$_{p}\right)  $ given a coin with bias $p$.

Following Aaronson and Watrous \cite{awat}, for $z\in\left(  0,1\right)  $
define the matrix $\Lambda_{z,p}\in\mathbb{C}^{S^{2}\times S^{2}}$ by
\[
\Lambda_{z,p}:=z\left[  I-\left(  1-z\right)  B_{p}\right]  ^{-1}.
\]
The matrix $I-\left(  1-z\right)  B_{p}$ is invertible, since $z>0$\ and all
eigenvalues of $B_{p}$\ have absolute value at most $1$.\footnote{For the
latter fact, see~\cite{td:equilib} and~\cite[p. 10, footnote 1]{awat}.}
\ Using Cramer's rule, we can represent each entry of $\Lambda_{z,p}$\ in the
form $\frac{f\left(  z,p\right)  }{g\left(  z,p\right)  }$, where $f$ and $g$
are bivariate polynomials of degree at most $S^{2}$ in both $z$ and $p$, and
$g\left(  z,p\right)  $\ is not identically zero. \ Note that by collecting
terms, we can write%
\begin{align*}
f\left(  z,p\right)   &  =c_{0}\left(  p\right)  +c_{1}\left(  p\right)
z+\cdots+c_{S^{2}}\left(  p\right)  z^{S^{2}}\\
g\left(  z,p\right)   &  =d_{0}\left(  p\right)  +d_{1}\left(  p\right)
z+\cdots+d_{S^{2}}\left(  p\right)  z^{S^{2}},
\end{align*}
for some coefficients $c_{0},\ldots,c_{S^{2}}$\ and $d_{0},\ldots,d_{S^{2}}$.
Now let%
\begin{equation}
\Lambda_{p}:=\lim_{z\rightarrow0}\Lambda_{z,p}. \label{phi_eq}%
\end{equation}
Aaronson and Watrous~\cite{awat} showed that $\Lambda_{p}$ is precisely the
matrix representation $\operatorname*{mat}\left(  \mathcal{E}%
_{\operatorname*{fix}}\right)  $ of the superoperator $\mathcal{E}%
_{\operatorname*{fix}}$ associated to $\mathcal{E}:=\mathcal{E}_{p}$ by
Theorem~\ref{fixpoint}. \ Thus we have
\[
B_{p}\left(  \Lambda_{p}v\right)  =\Lambda_{p}v
\]
for all $v\in\mathbb{C}^{S^{2}}$.

Now, the entries of $\Lambda_{z,p}$\ are bivariate rational functions, which
have absolute value at most $1$ for all $z,p$. \ Thus the limit in
equation~(\ref{phi_eq}) must exist, and the coeffients $c_{k}$, $d_{k}$ can be
computed in polynomial space using Theorem~\ref{bcpthm}.

We claim that every entry of $\Lambda_{p}$ can be represented as a rational
function of $p$ of degree at most $S^{2}$ (a representation valid for all but
finitely many $p$), and that the coefficients of this rational function are
computable in polynomial space. To see this, fix some $i,j\in\left[  S\right]
$, and let $\left(  \Lambda_{p}\right)  _{ij}$\ denote the $\left(
i,j\right)  ^{th}$\ entry of $\Lambda_{p}$. \ By the above, $\left(
\Lambda_{p}\right)  _{ij}$\ has the form%
\[
\left(  \Lambda_{p}\right)  _{ij}=\lim_{z\rightarrow0}\frac{f\left(
z,p\right)  }{g\left(  z,p\right)  }=\lim_{z\rightarrow0}\frac{c_{0}\left(
p\right)  +c_{1}\left(  p\right)  z+\cdots+c_{S^{2}}\left(  p\right)
z^{S^{2}}}{d_{0}\left(  p\right)  +d_{1}\left(  p\right)  z+\cdots+d_{S^{2}%
}\left(  p\right)  z^{S^{2}}}.
\]
By Proposition \ref{lhopital}, the above limit (whenever it exists) equals
$c_{k}\left(  p\right)  /d_{k}\left(  p\right)  $, where $k$ is the smallest
integer such that $d_{k}\left(  p\right)  \neq0$. \ Now let $k^{\ast}$\ be the
smallest integer such that $d_{k^{\ast}}$\ is not the identically-zero
polynomial. \ Then $d_{k^{\ast}}\left(  p\right)  $\ has only finitely many
zeros. It follows that $\left(  \Lambda_{p}\right)  _{ij}=c_{k^{\ast}}\left(
p\right)  /d_{k^{\ast}}\left(  p\right)  $ except when $d_{k^{\ast}}\left(
p\right)  =0$, which is what we wanted to show. \ That the coefficients are
rational and computable in polynomial space follows by construction: we can
loop through all $k$\ until we find $k^{\ast}$\ as above, and then compute the
coefficients of $c_{k^{\ast}}\left(  p\right)  $\ and $d_{k^{\ast}}\left(
p\right)  $.

Finally, we claim that we can write $A$'s limiting acceptance probability
$a\left(  p\right)  $ as%
\begin{equation}
a\left(  p\right)  =v_{\operatorname*{Acc}}^{\dag}\Lambda_{p}v_{0}%
,\label{accepteq}%
\end{equation}
where $v_{0}$ is the vectorized initial state of $A$ (independent of $p$). It
will follow from equation~(\ref{accepteq}) that $a\left(  p\right)  $ has the
desired rational-function representation, since the map $\Lambda
_{p}\rightarrow v_{\text{Acc}}^{\dag}\Lambda_{p}v_{0}$ is linear in the
entries of $\Lambda_{p}$ and can be performed in polynomial space.

To establish equation~(\ref{accepteq}), consider the Taylor series expansion
for $\Lambda_{z,p}$,
\[
\Lambda_{z,p}=\sum_{t\geq0}z(1-z)^{t}B_{p}^{t},
\]
valid for $z\in(0,1)$ (see~\cite{awat} for details). \ The equality
\[
\sum_{t\geq0}z(1-z)^{t}=1,\quad{}z\in(0,1),
\]
implies that $v_{\operatorname*{Acc}}^{\dag}\Lambda_{z,p}v_{0}$ is a weighted
average of the $t$-step acceptance probabilities $a_{t}\left(  p\right)  $,
for $t\in\{0,1,2,\ldots\}$. Letting $z\rightarrow0$, the weight on each
individual step approaches 0. Since $\lim_{t\rightarrow\infty}a_{t}(p)=a(p)$,
we obtain equation~(\ref{accepteq}).
\end{proof}

The next lemma lets us \textquotedblleft patch up\textquotedblright\ the
finitely many singularities, and show that $a_{x}\left(  p\right)  $\ is a
rational function in the entire open interval $\left(  0,1\right)
$.\footnote{Note that there could still be singularities at $p=0$\ and $p=1$,
and this is not just an artifact of the proof! \ For example, consider a
finite automaton that accepts when and only when it sees `heads.' \ The
acceptance probability of such an automaton satisfies $a\left(  0\right)  =0$,
but $a\left(  p\right)  =1$ for all $p\in\left(  0,1\right]  $.}

\begin{lemma}
\label{continuousq}$a_{x}\left(  p\right)  $ is continuous for all
$p\in\left(  0,1\right)  $.
\end{lemma}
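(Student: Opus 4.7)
The plan is to build on Lemma \ref{rational}: where $R(p) \neq 0$ the rational-function representation already delivers continuity, so the entire task reduces to the finitely many exceptional points $p_0 \in (0,1)$ at which $R(p_0) = 0$.

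First, I would reduce $Q(p)/R(p)$ to lowest terms $\bar Q(p)/\bar R(p)$ with $\gcd(\bar Q,\bar R)=1$, and argue that $\bar R$ has no zeros in $(0,1)$. For if $\bar R(p_0)=0$ with $p_0\in(0,1)$, then coprimality forces $\bar Q(p_0)\neq 0$, so $|\bar Q(p)/\bar R(p)|\to\infty$ as $p\to p_0$ along real values; but on the dense cofinite set $\{p : R(p)\neq 0\}$ this ratio equals $a_x(p)\in[0,1]$, a contradiction. Thus $\tilde a(p):=\bar Q(p)/\bar R(p)$ is continuous on $(0,1)$ and agrees with $a_x$ wherever $R(p)\neq 0$, giving $\lim_{p\to p_0}a_x(p)=\tilde a(p_0)$ for every $p_0\in(0,1)$.

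Second, I would check $a_x(p_0)=\tilde a(p_0)$ at each exceptional $p_0$. One direction is lower semicontinuity: each $a_{x,t}(p)=v_{\operatorname*{Acc}}^{\dag}B_p^{t}v_{0}$ is a polynomial in $p$, and $a_x=\sup_{t}a_{x,t}$ is a supremum of continuous functions, so $a_x(p_0)\le\liminf_{p\to p_0}a_x(p)=\tilde a(p_0)$. For the reverse direction I would return to the Aaronson--Watrous identity $a_x(p)=v_{\operatorname*{Acc}}^{\dag}\Lambda_{p}v_{0}$ with $\Lambda_{p}=\lim_{z\to 0^{+}}z[I-(1-z)B_{p}]^{-1}$, together with the Abel-average representation $f_z(p):=v_{\operatorname*{Acc}}^{\dag}\Lambda_{z,p}v_{0}=\sum_{t\ge 0}z(1-z)^{t}a_{x,t}(p)$. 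For each fixed $z\in(0,1)$, $f_z$ is a rational function of $p$ whose denominator $\det(I-(1-z)B_{p})$ is nonvanishing on $(0,1)$ (since $(1-z)B_{p}$ has spectral radius at most $1-z<1$), hence continuous on $(0,1)$ and valued in $[0,1]$, and $f_z(p)\to a_x(p)$ pointwise as $z\to 0^{+}$. Applying the same bivariate lowest-terms reasoning to the rational function $f_z(p)-\tilde a(p)$ as a function of $(z,p)$, bounded in absolute value by $1$ on $(0,1)^{2}$, I would conclude that its lowest-terms denominator cannot vanish in the open set $(0,1)^{2}$, and that the function extends continuously to the boundary point $(0,p_0)$ with value $0$. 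This forces $a_x(p_0)=\lim_{z\to 0^{+}}f_z(p_0)=\tilde a(p_0)$.

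The main obstacle will be the final bivariate joint-continuity step. Even with the uniform bound $|f_z(p)-\tilde a(p)|\le 2$ on $(0,1)^{2}$, a coprime bivariate rational function can display ``conic'' singularities at an isolated boundary point $(0,p_0)$ where the lowest-terms numerator and denominator vanish simultaneously but in different orders along different approach directions, so the pointwise $z\to 0^{+}$ limit is not a priori determined by the limits from nearby $p$. Ruling this out uses two ingredients: first, that the candidate limit $\tilde a(p_0)$ is pinned down because $f_z(p)\to\tilde a(p)$ as $z\to 0^{+}$ for every $p$ in a punctured neighborhood of $p_0$ (by the already established continuity of $\tilde a$ and Lemma \ref{rational}); second, that the degrees in $z$ and $p$ of the rational function are both bounded by $2^{\operatorname*{poly}(n)}$, so one can finitely blow up the point $(0,p_0)$ and use that bounded rational functions on a smooth variety have removable singularities, yielding the joint continuous extension and thus $a_x(p_0)=\tilde a(p_0)$.
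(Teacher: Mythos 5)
Your reduction to the finitely many exceptional points and the lower-semicontinuity half are sound: each $a_{x,t}(p)$ is a polynomial in $p$ and nondecreasing in $t$, so $a_x=\sup_t a_{x,t}$ is lower semicontinuous, and the one-variable lowest-terms argument does show that $\tilde a=\bar Q/\bar R$ is continuous on $(0,1)$ and agrees with $a_x$ off a finite set, giving $a_x(p_0)\le\tilde a(p_0)$. The genuine gap is the reverse inequality, i.e.\ your final bivariate step. Boundedness, degree bounds, and knowledge of $\lim_{z\to 0^{+}}f_z(p)$ for every $p\neq p_0$ simply do not determine $\lim_{z\to 0^{+}}f_z(p_0)$: consider $F(z,p)=z^{2}/\bigl(z^{2}+(p-p_0)^{2}\bigr)$, a coprime rational function of low degree, bounded by $1$ on $(0,1)^{2}$, with $F(z,p)\to 0$ as $z\to 0^{+}$ for every $p\neq p_0$, yet $F(z,p_0)\equiv 1$. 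Blowing up the point $(0,p_0)$ does make the pullback extend continuously to the exceptional divisor, but its value there varies with the direction of approach --- the blow-up \emph{describes} the conic singularity you worry about; it does not exclude it. (Moreover, in two variables boundedness does not even force the lowest-terms denominator to be nonvanishing at interior points, e.g.\ $(z-\tfrac12)(p-\tfrac12)/\bigl((z-\tfrac12)^{2}+(p-\tfrac12)^{2}\bigr)$.) So nothing in your argument rules out that $a_x$ genuinely drops below $\tilde a$ at an exceptional bias $p_0$; that possibility has to be excluded using properties of the machine, not of bounded rational functions.

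The missing ingredient is a statement about the dynamics that is \emph{uniform in $p$}, and this is exactly how the paper proceeds, by a quite different route: it defines the ``dead'' states (those from which $M$ never accepts), observes that deadness is independent of $p$ because $\mathcal{E}_p=p\mathcal{E}_1+(1-p)\mathcal{E}_0$ has the same support structure for all $p\in(0,1)$, and then uses the Leaky Subspace Lemma (Lemma~\ref{leaky}) together with convexity in $p$ to show that on any $[p_1,p_2]\subset(0,1)$ the probability of being neither accepted nor dead after $t$ steps decays like $(1-c)^{t/S^{2}}$ uniformly in $p$. This yields $a_t(p)\le a(p)\le a_t(p)+\delta$ on $[p_1,p_2]$ for a $t$ independent of $p$, so $a$ is a uniform limit of continuous functions and hence continuous --- no rational-function structure is needed at all. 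Some such uniform-tail (equicontinuity) bound for the truncations $a_{x,t}$ or the Abel averages $f_z$ is precisely what your approach lacks, and it cannot be manufactured from boundedness and degree bounds alone.
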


\begin{proof}
Once again we suppress the dependence on $x$, so that $a\left(  p\right)
=\lim_{t\rightarrow\infty}a_{t}\left(  p\right)  $\ is just the limiting
acceptance probability of a finite automaton $M\left(  \$_{p}\right)  $.

To show that $a\left(  p\right)  $ is continuous on $\left(  0,1\right)  $, it
suffices to show that $a\left(  p\right)  $\ is continuous on every closed
subinterval $\left[  p_{1},p_{2}\right]  $ such that $0<p_{1}<p_{2}<1$. \ We
will prove this by proving the following claim:

\begin{quotation} \noindent \textit{(*) For every subinterval }$\left[  p_{1},p_{2}\right]  $\textit{\ and
every }$\delta>0$\textit{, there exists a time }$t$\textit{ (not depending on
}$p$\textit{) such that }$a_{t}\left(  p\right)  \geq a\left(  p\right)
-\delta$\textit{\ for all }$p\in\left[  p_{1},p_{2}\right]  $. \end{quotation}

Claim (*) implies that $a\left(  p\right)  $\ can be uniformly approximated by
continuous functions on $\left[  p_{1},p_{2}\right]  $, and hence is
continuous itself on $\left[  p_{1},p_{2}\right]  $.

We now prove claim (*). \ First, call a mixed state $\rho$\ \textit{dead for
bias} $p$ if $M\left(  \$_{p}\right)  $ halts with probability $0$ when run
with $\rho$\ as its initial state. \ Now, the superoperator applied by
$M\left(  \$_{p}\right)  $\ at each time step is $\mathcal{E}_{p}%
=p\mathcal{E}_{1}+\left(  1-p\right)  \mathcal{E}_{0}$. \ This means that
$\rho$ is dead for any bias $p\in\left(  0,1\right)  $, if and only if $\rho$
is dead for bias $p=1/2$. \ So we can simply refer to such a $\rho$ as
\textit{dead}, with no dependence on $p$.

Recall that $B_{p}:=\operatorname*{mat}\left(  \mathcal{E}_{p}\right)  $.
\ Observe that $\rho$ is dead if and only if%
\[
v_{\operatorname*{Acc}}^{\dag}B_{1/2}^{t}\operatorname*{vec}\left(
\rho\right)  =0
\]
for all $t\geq0$. \ In particular, it follows that there exists a
\textquotedblleft dead subspace\textquotedblright\ $D$\ of $\mathbb{C}^{S}$,
such that a pure state $\left\vert \psi\right\rangle $\ is dead if and only if
$\left\vert \psi\right\rangle \in D$. \ (A mixed state $\rho=\sum_{i}%
p_{i}\left\vert \psi_{i}\right\rangle \left\langle \psi_{i}\right\vert $ is
dead if and only if $\left\vert \psi_{i}\right\rangle \in D$ for all $i$ such
that $p_{i}>0$.) \ By its definition, $D$\ is orthogonal to the $\left\vert
\operatorname*{Accept}\right\rangle $ state. \ Define the \textquotedblleft
live subspace,\textquotedblright\ $L$, to be the orthogonal complement of
$\left\vert \operatorname*{Accept}\right\rangle $ and $D$.

Let $P$ be the projector onto $L$, and let $v_{\operatorname*{Live}%
}:=\operatorname*{vec}\left(  P\right)  $. \ Also, recalling that $v_{0}$\ is
the vectorized initial state of $M$, let%
\[
g_{t}\left(  p\right)  :=v_{\operatorname*{Live}}^{\dag}B_{p}^{t}v_{0}%
\]
be the probability that $M\left(  \$_{p}\right)  $\ is \textquotedblleft still
alive\textquotedblright\ if measured after $t$ steps---i.e., that $M$ has
neither accepted nor entered the dead subspace. \ Clearly $a\left(  p\right)
\leq a_{t}\left(  p\right)  +g_{t}\left(  p\right)  $.

Thus, to prove claim (*), it suffices to prove that for all $\delta>0$, there
exists a $t$ (not depending on $p$) such that $g_{t}\left(  p\right)
\leq\delta$\ for all $p\in\left[  p_{1},p_{2}\right]  $. \ First, let $U$\ be
the set of all $\rho$\ supported only on the live subspace $L$, and notice
that $U$ is compact. \ Therefore, by Lemma \ref{leaky}\ (the \textquotedblleft
Leaky Subspace Lemma\textquotedblright), there exists a constant $c_{1}%
>0$\ such that, for all $\rho\in U$,%
\[
\left(  v_{\operatorname*{Acc}}^{\dag}+v_{\operatorname*{Dead}}^{\dag}\right)
B_{p_{1}}^{S^{2}}\operatorname*{vec}\left(  \rho\right)  \geq c_{1}%
\]
and hence%
\[
v_{\operatorname*{Live}}^{\dag}B_{p_{1}}^{S^{2}}\operatorname*{vec}\left(
\rho\right)  \leq1-c_{1}.
\]
Likewise, there exists a $c_{2}>0$\ such that, for all $\rho\in U$,%
\[
v_{\operatorname*{Live}}^{\dag}B_{p_{2}}^{S^{2}}\operatorname*{vec}\left(
\rho\right)  \leq1-c_{2}.
\]
Let $c:=\min\left\{  c_{1},c_{2}\right\}  $. \ Then by convexity, for all
$p\in\left[  p_{1},p_{2}\right]  $\ and all $\rho\in U$, we have%
\[
v_{\operatorname*{Live}}^{\dag}B_{p}^{S^{2}}\operatorname*{vec}\left(
\rho\right)  \leq1-c,
\]
and hence%
\[
v_{\operatorname*{Live}}^{\dag}B_{p}^{S^{2}t}\operatorname*{vec}\left(
\rho\right)  \leq\left(  1-c\right)  ^{t}%
\]
for all $t\geq0$. \ This means that, to ensure that $g_{t}\left(  p\right)
\leq\delta$\ for all $p\in\left[  p_{1},p_{2}\right]  $ simultaneously, we
just need to choose $t$ large enough that $\left(  1-c\right)  ^{t/S^{2}}%
\leq\delta$. \ This proves claim (*).
\end{proof}

We are now ready to complete the proof of Theorem~\ref{mainthm}. \ Let $L$ be
a language in $\mathsf{BQPSPACE/coin}$, which is decided by the quantum
polynomial-space advice-coin machine $M\left(  x,\$_{p}\right)  $ on advice
coin biases $\left\{  p_{n}\right\}  _{n\geq1}$. We will show that $L
\in\mathsf{BQPSPACE/poly} = \mathsf{PSPACE/poly}$.

It may not be possible to perfectly specify the bias $p_{n}$ using
$\operatorname*{poly}\left(  n\right)  $ bits of advice. Instead, we use our
advice string to simulate access to a second bias $r_{n}$ that is
\textquotedblleft almost as good\textquotedblright\ as $p_{n}$. This is
achieved by the following lemma.

\begin{lemma}
\label{findbias}Fixing $L,M,\{p_{n}\}$ as above, there exists a classical
polynomial-space algorithm $R$, as well as a family $\left\{  w_{n}\right\}
_{n\geq1}$ of polynomial-size advice strings, for which the following holds.
\ Given an index $i\leq2^{\operatorname*{poly}\left(  n\right)  }$, the
computation $R\left(  w_{n},i\right)  $ outputs the $i^{th}$ bit of a real
number $r_{n}\in\left(  0,1\right)  $,\ such that for all $x\in\left\{
0,1\right\}  ^{n}$,

\begin{enumerate}
\item[(i)] If $x\in L$, then $\Pr\left[  M\left(  x,\$_{r_{n}}\right)  \text{
accepts}\right]  \geq3/5$.

\item[(ii)] If $x\notin L$, then $\Pr\left[  M\left(  x,\$_{r_{n}}\right)
\text{ accepts}\right]  \leq2/5$.

\item[(iii)] The binary expansion of $r_{n}$ is identically zero, for
sufficiently large indices $j\geq h(n)=2^{\operatorname*{poly}\left(
n\right)  }$.
\end{enumerate}
\end{lemma}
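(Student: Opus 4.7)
The plan is to use the rational-function structure of $a_x(p)$ to locate an open interval $I_n\ni p_n$ on which $M$ still decides $L$ with the relaxed $[2/5,3/5]$ gap, and then to choose $r_n$ to be a canonical dyadic rational in $I_n$. By Lemma~\ref{continuousq}, each $a_x$ is continuous at $p_n$, and satisfies $a_x(p_n)\geq 2/3$ or $a_x(p_n)\leq 1/3$. Intersecting, over the finitely many $x\in\{0,1\}^n$, the neighborhoods of $p_n$ on which the relaxed inequality holds gives an open interval $I_n=(\alpha_L,\alpha_R)\ni p_n$ with $a_x(p)\geq 3/5$ for $x\in L$ and $a_x(p)\leq 2/5$ for $x\notin L$, simultaneously for all $p\in I_n$.

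Next I would describe the endpoints algebraically. By Lemma~\ref{rational}, each $a_x(p)=Q_x(p)/R_x(p)$ where $Q_x,R_x$ have degree $\leq 2^{\mathrm{poly}(n)}$ and, after clearing denominators, integer coefficients of bit-length $\leq 2^{\mathrm{poly}(n)}$, each computable in polynomial space. Each endpoint is either $0$, $1$, or a root in $(0,1)$ of one of the $2^{n+1}$ polynomials $P_x^c(p):=5Q_x(p)-cR_x(p)$ for $c\in\{2,3\}$. Applying Theorem~\ref{sepbound} to a single product $P_{x_L}^{c_L}\cdot P_{x_R}^{c_R}$---whose degree and integer-coefficient bit-length both remain $2^{\mathrm{poly}(n)}$---gives $|\alpha_R-\alpha_L|\geq 2^{-H}$ for some $H=2^{\mathrm{poly}(n)}$. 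The edge cases $\alpha_L=0$ or $\alpha_R=1$ give the same lower bound $2^{-H}$ by the Cauchy root bound (any nonzero root of an integer polynomial with coefficient bit-length $\tau$ has modulus $\geq 2^{-\tau-1}$, applied also to the substitution $p\mapsto 1-p$). Setting $h(n):=H+2=2^{\mathrm{poly}(n)}$ ensures $|I_n|\geq 4\cdot 2^{-h(n)}$, hence some dyadic $r_n=k/2^{h(n)}$ lies strictly inside $I_n$.

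For the advice and the algorithm, I would let $w_n$ be the tuple $(x_L,c_L,j_L,x_R,c_R,j_R)$, together with two flag bits indicating whether each endpoint is $0$, $1$, or the $j$-th real root in $(0,1)$ of the indicated $P_x^c$ under Neff's enumeration (Theorem~\ref{findroots}); this is $\mathrm{poly}(n)$ bits. On input $(w_n,i)$, the algorithm $R$ outputs $0$ if $i\geq h(n)$, and otherwise invokes Theorem~\ref{findroots} on $P_{x_L}^{c_L}$ and $P_{x_R}^{c_R}$ (with coefficient oracle supplied by Lemma~\ref{rational}) to obtain the top $h(n)+O(1)$ bits of $\alpha_L,\alpha_R$, then returns the $i$-th bit of the canonical choice $r_n:=(1+\lfloor 2^{h(n)}\alpha_L\rfloor)/2^{h(n)}$. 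The inequality $r_n-\alpha_L\leq 2^{-h(n)+1}<4\cdot 2^{-h(n)}\leq\alpha_R-\alpha_L$ places $r_n$ strictly inside $I_n$. Since all degrees, coefficient sizes, and the required precision are at most $2^{\mathrm{poly}(n)}$, Theorem~\ref{findroots} runs in $\mathrm{polylog}(2^{\mathrm{poly}(n)})=\mathrm{poly}(n)$ space. Properties~(i)--(iii) then follow from $r_n\in I_n$ together with the dyadic form of $r_n$.

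The main obstacle to be managed is the tempting but doomed strategy of identifying all boundary points at once using the single polynomial $\prod_{x,c}P_x^c$: that product has $2^{n+1}$ factors, pushing its integer-coefficient bit-length to $2^{2^{\mathrm{poly}(n)}}$ and rendering both the separation bound (Theorem~\ref{sepbound}) and Neff's root-finder (Theorem~\ref{findroots}) useless. The fix is structural---the advice must pick out each endpoint's individual polynomial and root index, so that the separation bound is only ever applied to a product of \emph{two} such polynomials. Keeping this coefficient blow-up under control, and verifying that each of Neff's parameters $(d,i,m)$ stays at most $2^{\mathrm{poly}(n)}$ so that the $\mathrm{polylog}$ space bound translates to $\mathrm{poly}(n)$, is the delicate bookkeeping on which the proof hinges.
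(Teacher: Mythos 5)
Your proposal is correct in its essentials, but it implements the advice differently from the paper, so a comparison is worth recording. Both arguments rest on the same four ingredients---Lemma~\ref{rational} (exponential-degree rational functions), Lemma~\ref{continuousq} (continuity on $(0,1)$), the root-separation bound of Theorem~\ref{sepbound}, and Neff's root-finder (Theorem~\ref{findroots})---and both exploit the margin between the error bounds $1/3,2/3$ and the relaxed thresholds $2/5,3/5$. The paper, however, never identifies the particular crossing points adjacent to $p_n$: it defines the set of \emph{potential transition pairs} (all zeros of the polynomials $U_x$, over all $2^n$ inputs $x$, plus $0$), lets the advice be the \emph{count} of such values lying at or below $p_n$, and has the algorithm re-enumerate all of them at runtime (looping over every $x$, deduplicating Neff's repeated roots, and comparing roots to $2^{\mathrm{poly}(n)}$ bits, which is where the separation bound enters); it then sets $r_n=p_0+\varepsilon$ for the largest such value $p_0\leq p_n$. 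Your advice instead \emph{names} the two transition values bracketing $p_n$ (polynomial $P_x^c$ plus root index for each endpoint of the maximal safe interval), so the runtime work shrinks to two Neff computations and a floor-plus-one, and the separation bound is applied only to the product of two polynomials rather than to all pairs of potential transition values. What your route requires in exchange is the maximality argument that the endpoints of the safe interval are genuine crossing points, including the same continuity-based care the paper applies to $U_x$ at the finitely many points where the representation $Q_x/R_x$ breaks down (your $P_x^c$ still vanishes there, but you should say so), the observation that $P_x^c\not\equiv 0$ (which follows from $a_x(p_n)$ being bounded away from $c/5$), and the explicit $0$/$1$-endpoint cases, which you handle with a Cauchy-type bound where the paper multiplies in the factors $p$ or $1-p$. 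Both proofs, yours and the paper's, implicitly assume $p_n\in(0,1)$ when invoking Lemma~\ref{continuousq}, so that shared gloss is not a point against you. Net effect: same skeleton and the same key lemmas, but a different advice/algorithm decomposition---the paper's counting advice avoids reasoning about maximal intervals, while yours buys a simpler PSPACE implementation (no global enumeration, deduplication, or pairwise comparisons).
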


Once Lemma~\ref{findbias} is proved, showing the containment $L\in
\mathsf{BQPSPACE/poly}$ is easy. \ First, we claim that using the advice
family $\left\{  w_{n}\right\}  $, we can simulate access to $r_{n}$-biased
coin flips, as follows. \ Let $r_{n}=0.b_{1}b_{2}\ldots$ denote the binary
expansion of $r_{n}$.\bigskip

\texttt{given }$w_{n}$

$j := 0$

\texttt{while }$j<h\left(  n\right)  $

\qquad\texttt{let }$z_{j}\in\left\{  0,1\right\}  $\texttt{ be random}

\qquad$b_{j}:=R\left(  w_{n},j\right)  $

\qquad\texttt{if }$z_{j}<b_{j}$\texttt{\ then output }1

\qquad\texttt{else if }$z_{j}>b_{j}$\texttt{\ then output }$\mathtt{0}$

\qquad\texttt{else }$j := j+1$

\texttt{output }$0$\bigskip

Observe that this algorithm, which runs in polynomial space, outputs $1$ if
and only if
\[
0.z_{1}z_{2}\ldots z_{h\left(  n\right)  }<0.b_{1}b_{2}\ldots b_{h\left(
n\right)  }=r_{n},
\]
and this occurs with probability $r_{n}$. Thus we can simulate $r_{n}$-biased
coin flips as claimed.

We define a $\mathsf{BQPSPACE/poly}$ machine $M^{\prime}$ that takes $\left\{
w_{n}\right\}  $ from Lemma~\ref{findbias} as its advice. \ Given an input
$x\in\left\{  0,1\right\}  ^{n}$, the machine $M^{\prime}$ simulates $M\left(
x,\$_{r_{n}}\right)  $, by generating $r_{n}$-biased coin flips using the
method described above. Then $M^{\prime}$ is a $\mathsf{BQPSPACE/poly}$
algorithm for $L$ by parts (i) and (ii) of Lemma~\ref{findbias}, albeit with
error bounds $\left(  2/5,3/5\right)  $. \ The error bounds can be boosted to
$\left(  1/3,2/3\right)  $ by running several independent trials. \ So
$L\in\mathsf{BQPSPACE/poly}=\mathsf{PSPACE/poly}$, completing the proof of
Theorem~\ref{mainthm}.

\begin{proof}
[Proof of Lemma~\ref{findbias}]Fix an input length $n>0$, and let $p^{\ast
}:=p_{n}$. \ For $x\in\{0,1\}^{n}$, recall that $a_{x}\left(  p\right)  $
denotes the acceptance probability of $M\left(  x,\$_{p}\right)  $. \ We are
interested in the way $a_{x}\left(  p\right)  $ oscillates as we vary $p$.
\ Define a \textit{transition pair} to be an ordered pair $\left(  x,p\right)
\in\left\{  0,1\right\}  ^{n}\times\left(  0,1\right)  $\ such that
$a_{x}\left(  p\right)  \in\left\{  2/5,3/5\right\}  $. \ It will be also be
convenient to define a larger set of \textit{potential transition pairs},
denoted $\mathcal{P}\subseteq\left\{  0,1\right\}  ^{n}\times\left[
0,1\right)  $, that contains the transition pairs; the benefit of considering
this larger set is that its elements will be easier to enumerate. \ We defer
the precise definition of $\mathcal{P}$.

The advice string $w_{n}$ will simply specify \textit{the number of distinct
potential transition pairs} $\left(  y,p\right)  $ \textit{such that} $p\leq
p^{\ast}$. \ We first give a high-level pseudocode description of the
algorithm $R$; after proving that parts (i) and (ii) of Lemma~\ref{findbias}
are met by the algorithm, we will fill in the algorithmic details to show that
the pseudocode can be implemented in $\mathsf{PSPACE}$, and that we can
satisfy part (iii) of the Lemma.

The pseudocode for $R$ is as follows:\bigskip

\texttt{given }$\left(  w_{n},i\right)  $

\texttt{for all }$\left(  y,p\right)  \in\mathcal{P}$

$\qquad s:=0$

$\qquad$\texttt{for all }$\left(  z,q\right)  \in\mathcal{P}$

\qquad\qquad\texttt{if }$q \leq p$\texttt{\ then }$s:=s+1$

\qquad\texttt{next\ }$\left(  z,q\right)  $

\qquad\texttt{if }$s=w_{n}$\texttt{\ then}

\qquad\qquad\texttt{let }$r_{n} :=p+\varepsilon$\texttt{ (for some small
}$\varepsilon=2^{-2^{\operatorname*{poly}\left(  n\right)  }}$\texttt{)}

\qquad\qquad\texttt{output the $i$}$^{\mathtt{th}}$\texttt{ bit of }$r_{n}$

\qquad\texttt{end if}

\texttt{next }$\left(  y,p\right)  \bigskip$

We now prove that parts (i) and (ii) of Lemma~\ref{findbias} are satisfied.
\ We call $p\in\left[  0,1\right)  $ a \textit{transition value} if $\left(
y,p\right)  $ is a transition pair for some $y\in\left\{  0,1\right\}  ^{n}$,
and we call $p$ a \textit{potential transition value} if $\left(  y,p\right)
\in\mathcal{P}$ for some $y\in\left\{  0,1\right\}  ^{n}$. \ Then by
definition of $w_{n}$, the value $r_{n}$ produced above is equal to
$p_{0}+\varepsilon$, where $p_{0}\in\lbrack0,1)$ is the largest potential
transition value less than or equal to $p^{\ast}$. \ (Note that $0$ will
always be a potential transition value, so this is well-defined.)

When we define $\mathcal{P}$, we will argue that any distinct potential
transition values $p_{1},p_{2}$ satisfy%
\begin{equation}
\min\left\{  \left\vert p_{1}-p_{2}\right\vert ,1-p_{2}\right\}
\geq2^{-2^{\operatorname*{poly}\left(  n\right)  }}. \label{ptp_sep}%
\end{equation}
It follows that if $\varepsilon=2^{-2^{\operatorname*{poly}\left(  n\right)
}}$ is suitably small, then $r_{n}<1$, and there is no potential transition
value lying in the range $\left(  p_{0},r_{n}\right]  $. \ Also, there are no
potential transition values in the interval $\left(  p_{0},p^{\ast}\right)  $.

Now fix any $x\in\left\{  0,1\right\}  ^{n}\cap L$. \ Since $M$ is a
$\mathsf{BQPSPACE/coin}$ machine for $L$ with bias $p^{\ast}$, we have
$a_{x}\left(  p^{\ast}\right)  \geq2/3$. \ If $a_{x}\left(  r_{n}\right)
<3/5$, then Lemma~\ref{continuousq} implies that there must be a transition
value in the open interval between $p^{\ast}$ and $r_{n}$. \ But there are no
such transition values. \ Thus $a_{x}\left(  r_{n}\right)  \geq3/5$.
\ Similarly, if $x\in\left\{  0,1\right\}  ^{n}\setminus L$, then
$a_{x}\left(  r_{n}\right)  \leq2/5$. \ This establishes parts (i) and (ii) of
Lemma~\ref{findbias}.

Now we formally define the potential transition pairs $\mathcal{P}$. \ We
include $\left(  0^{n},0\right)  $ in $\mathcal{P}$, guaranteeing that $0$ is
a potential transition value as required. \ Now recall, by
Lemma~\ref{rational}, that for each $x\in\left\{  0,1\right\}  ^{n}$, the
acceptance probability $a_{x}\left(  p\right)  $ is a rational function
$Q_{x}\left(  p\right)  /R_{x}\left(  p\right)  $ of degree
$2^{\operatorname*{poly}\left(  n\right)  }$, for all but finitely many
$p\in\left(  0,1\right)  $. \ Therefore, the function $\left(  a_{x}\left(
p\right)  -3/5\right)  \left(  a_{x}\left(  p\right)  -2/5\right)  $ also has
a rational-function representation:
\[
\frac{U_{x}\left(  p\right)  }{V_{x}\left(  p\right)  }=\left(  a_{x}\left(
p\right)  -\frac{3}{5}\right)  \left(  a_{x}\left(  p\right)  -\frac{2}%
{5}\right)  ,
\]
valid for all but finitely many $p$. \ We will include in $\mathcal{P}$ all
pairs $\left(  x,p\right)  $ for which $U_{x}\left(  p\right)  =0$. \ It
follows from Lemmas~\ref{rational} and~\ref{continuousq} that $\mathcal{P}$
contains all transition pairs, as desired.

We can now establish equation~(\ref{ptp_sep}). \ Fix any distinct potential
transition values $p_{1}<p_{2}$ in $\left[  0,1\right)  $. \ Since $p_{2}%
\neq0$ is a potential transition value, there is some $x_{2}$ such that
$\left(  x_{2},p_{2}\right)  \in\mathcal{P}$. \ If $p_{1}=0$, then
$p_{1},p_{2}$ are distinct roots of the polynomial $pU_{x_{2}}(p)$, whence
$\left\vert p_{1}-p_{2}\right\vert \geq2^{-2^{\operatorname*{poly}\left(
n\right)  }}$ by Theorem~\ref{sepbound}. \ Similarly, if $p_{1}>0$, then
$\left(  x_{1},p_{1}\right)  \in\mathcal{P}$ for some $x_{1}$. \ We observe
that $p_{1},p_{2}$ are common roots of $U_{x_{1}}\left(  p\right)  U_{x_{2}%
}\left(  p\right)  $, from which it again follows that $\left\vert p_{1}%
-p_{2}\right\vert \geq2^{-2^{\operatorname*{poly}\left(  n\right)  }}$.
\ Finally, $1-p_{2}\geq2^{-2^{\operatorname*{poly}\left(  n\right)  }}$
follows since $1$ and $p_{2}$ are distinct roots of $\left(  1-p\right)
U_{x_{2}}\left(  p\right)  $. \ Thus equation~(\ref{ptp_sep}) holds.

Next we show that the pseudocode can be implemented in $\mathsf{PSPACE}$.
\ Observe first that the degrees of $U_{x},V_{x}$ are $2^{\operatorname*{poly}%
\left(  n\right)  }$, with rational coefficients having numerator and
denominator bounded by $2^{\operatorname*{poly}\left(  n\right)  }$.
\ Moreover, the coefficients of $U_{x},V_{x}$ are computable in
$\mathsf{PSPACE}$ from the coefficients of $Q_{x},R_{x}$, and these
coefficients are themselves $\mathsf{PSPACE}$-computable. \ To loop over the
elements of $\mathcal{P}$ as in the for-loops of the pseudocode, we can
perform an outer loop over $y\in\left\{  0,1\right\}  ^{n}$ and an inner loop
over the zeros of $U_{y}$. These zeros are indexed by Neff's algorithm
(Theorem~\ref{findroots}) and can be looped over with that indexing. \ The
algorithm of Theorem~\ref{findroots} may return duplicate roots, but these can
be identified and removed by comparing each root in turn to all previously
visited roots. \ For each pair of distinct zeros of $U_{y}$ differ in their
binary expansion to a sufficiently large $2^{\operatorname*{poly}\left(
n\right)  }$ number of bits (by Theorem~\ref{sepbound}), and
Theorem~\ref{findroots} allows us to compare such bits in polynomial space.

Similarly, if $\left(  y,p\right)  ,\left(  z,q\right)  \in\mathcal{P}$ then
we can determine in $\mathsf{PSPACE}$ whether $q\leq p$, as required. \ The
only remaining implementation step is to produce the value $r_{n}$ in
$\mathsf{PSPACE}$, in such a way that part (iii) of Lemma~\ref{findbias} is
satisfied. \ Given the value $p$ chosen by the inner loop, and the index
$i\leq2^{\operatorname*{poly}\left(  n\right)  }$, we need to produce the
$i^{th}$ bit of a value $r_{n}\in\left(  p,p+2^{-2^{\operatorname*{poly}%
\left(  n\right)  }}\right)  $, such that the binary expansion of $r_{n}$ is
identically zero for sufficiently large $j\geq h(n)=2^{\operatorname*{poly}%
\left(  n\right)  }$. But this is easily done, since we can compute any
desired $j^{th}$ bit of $p$, for $j\leq2^{\operatorname*{poly}\left(
n\right)  }$, in polynomial space.
\end{proof}%

\begin{figure}
[ptb]
\begin{center}
\includegraphics[
trim=0.000000in 3.010514in 0.201697in 0.804056in,
height=2.2018in,
width=5.2632in
]%
{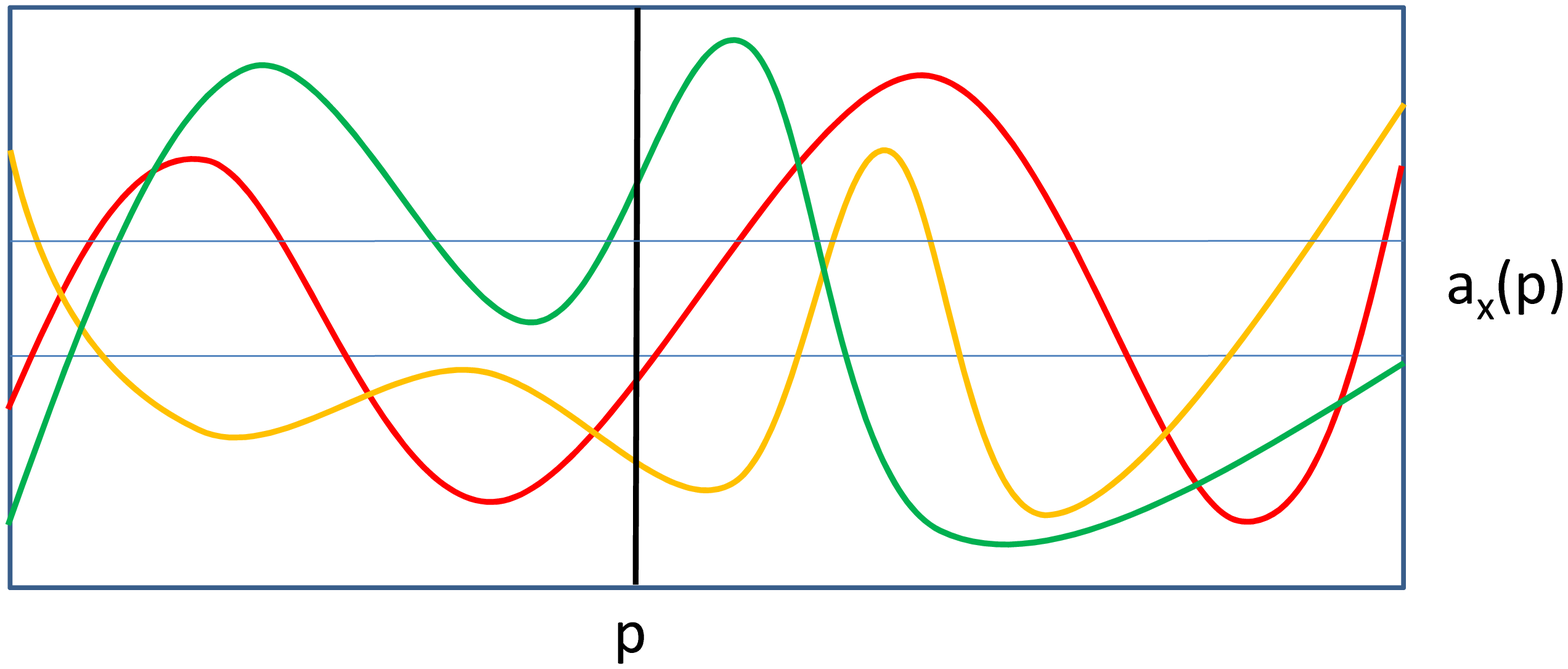}%
\caption{Graphical depiction of the proof of Theorem \ref{mainthm}, that
$\mathsf{BQPSPACE/coin}=\mathsf{PSPACE/poly}$. \ For each input $x\in\left\{
0,1\right\}  ^{n}$, the acceptance probability of the $\mathsf{BQPSPACE/coin}%
$\ machine is a rational function $a_{x}\left(  p\right)  $\ of the coin bias
$p$, with degree at most $2^{\operatorname*{poly}\left(  n\right)  }$. \ Such
a function can cross the $a_{x}\left(  p\right)  =2/5$\ or $a_{x}\left(
p\right)  =3/5$\ lines at most $2^{\operatorname*{poly}\left(  n\right)  }$
times. \ So even considering \textit{all} $2^{n}$\ inputs $x$, there can be at
most $2^{\operatorname*{poly}\left(  n\right)  }$\ crossings in total. \ It
follows that, if we want to specify whether $a_{x}\left(  p\right)  <2/5$ or
$a_{x}\left(  p\right)  >3/5$\ for all $2^{n}$\ inputs $x$ simultaneously, it
suffices to give only $\operatorname*{poly}\left(  n\right)  $ bits of
information about $p$ (for example, the total number of crossings to the left
of $p$).}%
\label{axp}%
\end{center}
\end{figure}

\section{Distinguishing Problems for Finite Automata\label{MODELS}}

The \textit{distinguishing problem}, as described in Section \ref{INTRO}, is a
natural problem with which to investigate the power of restricted models of
computation. The basic task is to distinguish a coin of bias $p$ from a coin
of bias $p+\varepsilon$, using a finite automaton with a bounded number of
states. \ Several variations of this problem have been explored~\cite{hc,
cover}, which modify either the model of computation or the mode of
acceptance. \ A basic question to explore in each case is whether the
distinguishing task can be solved by a finite automaton whose number of states
is independent of the value $\varepsilon$ (for fixed $p$, say).

Variations of interest include:

\begin{enumerate}
\item[(1)] \textit{Classical vs.\ quantum finite automata.} \ We showed in
Section~\ref{NULLIFY} that, in some cases, quantum finite automata can solve
the distinguishing problem where classical ones cannot.

\item[(2)] $\varepsilon$\textit{-dependent vs.\ }$\varepsilon$%
\textit{-independent automata.} \ Can a single automaton $M$ distinguish
$\$_{p}$\ from $\$_{p+\varepsilon}$\ for every $\varepsilon>0$, or is a
different automaton $M_{\varepsilon}$ required for different $\varepsilon$?

\item[(3)] \textit{Bias }$\mathit{0}$\textit{ vs.\ bias }$1/2$\textit{.} \ Is
the setting $p=0$ easier than the setting $p=1/2$?

\item[(4)] \textit{Time-dependent vs.\ time-independent automata.} \ An
alternative, \textquotedblleft nonuniform\textquotedblright\ model of finite
automata allows their state-transition function to depend on the
\textit{current time step} $t\geq0$, as well as on the current state and the
current bit being read. \ This dependence on $t$ can be arbitrary; the
transition function is not required to be computable given $t$.

\item[(5)] \textit{Acceptance by halting vs.\ }$\mathit{1}$\textit{-sided
acceptance vs.\ acceptance in the limit.} \ How does the finite automaton
register its final decision? \ A first possibility is that the automaton halts
and enters an $\left\vert \operatorname*{Accept}\right\rangle $\ state if it
thinks the bias is $p+\varepsilon$, or halts and enters a $\left\vert
\operatorname*{Reject}\right\rangle $\ state if it thinks the bias is $p$. \ A
second possibility, which corresponds to the model considered for most of this
paper, is that the automaton halts and enters an $\left\vert
\operatorname*{Accept}\right\rangle $\ state if it thinks the bias is
$p+\varepsilon$, but can reject by simply never halting. \ A third possibility
is that the automaton \textit{never} needs to halt. \ In this third model, we
designate some subset of the states as \textquotedblleft accepting
states,\textquotedblright\ and let $a_{t}$\ be the probability that the
automaton would be found in an accepting state, were it measured at the
$t^{th}$\ time step. \ Then the automaton is said to \textit{accept in the
limit} if $\liminf_{t\rightarrow\infty}(a_{1}+\ldots+a_{t})/t\geq2/3$, and to
\textit{reject in the limit} if $\limsup_{t\rightarrow\infty}(a_{1}%
+\cdots+a_{t})/t\leq1/3$. \ The automaton solves the distinguishing
problem\ if it accepts in the limit on a coin of bias $p+\varepsilon$, and
rejects in the limit on a coin of bias $p$.
\end{enumerate}

For almost every possible combination of the above, we can determine whether
the distinguishing problem can be solved by an automaton whose number of
states is independent of $\varepsilon$, by using the results and techniques
of~\cite{hc, cover} as well as the present paper. \ The situation is
summarized in the following two tables.%

\begin{align*}
&
\begin{tabular}
[c]{lllllll}%
\textbf{Classical case} & \multicolumn{6}{l}{\textit{Coin distinguishing
task}}\\
& $\frac{1}{2}$ vs.\ $\frac{1}{2}+\varepsilon$ &  &  & $0$ vs.\ $\varepsilon$
&  & \\
\textit{Type of automaton} & Halt & $1$-Sided & Limit & Halt & $1$-Sided &
Limit\\
Fixed & No & No & No & No & Yes (easy) & Yes\\
$\varepsilon$-dependent & No & No & No \cite{hc} & Yes (easy) & Yes & Yes\\
Time-dependent & No & Yes \cite{cover} & Yes & No & Yes & Yes\\
$\varepsilon$,time-dependent & Yes \cite{cover} & Yes & Yes & Yes & Yes & Yes
\end{tabular}
\\
&
\begin{tabular}
[c]{lllllll}%
\textbf{Quantum case} & \multicolumn{6}{l}{\textit{Coin distinguishing task}%
}\\
& $\frac{1}{2}$ vs.\ $\frac{1}{2}+\varepsilon$ &  &  & $0$ vs.\ $\varepsilon$
&  & \\
\textit{Type of automaton} & Halt & $1$-Sided & Limit & Halt & $1$-Sided &
Limit\\
Fixed & No & No (here) & ? & No & Yes & Yes\\
$\varepsilon$-dependent & \textbf{Yes (here)} & \textbf{Yes} & \textbf{Yes} &
Yes & Yes & Yes\\
Time-dependent & No & Yes & Yes & No (easy) & Yes & Yes\\
$\varepsilon$,time-dependent & Yes & Yes & Yes & Yes & Yes & Yes
\end{tabular}
\\
&
\end{align*}

Let us briefly discuss the possibility and impossibility results.

\begin{itemize}
\item[(1)] Hellman and Cover \cite{hc} showed that a classical finite
automaton needs $\Omega\left(  1/\varepsilon\right)  $\ states to distinguish
$p=1/2$\ from $p=1/2+\varepsilon$, even if the transition probabilities can
depend on $\varepsilon$\ and the automaton only needs to succeed in the limit.

\item[(2)] By contrast, Theorem \ref{qmnhc} shows that an $\varepsilon
$-dependent quantum finite automaton with only \textit{two} states can
distinguish $p=1/2$\ from $p=1/2+\varepsilon$ for any $\varepsilon>0$, even if
the automaton needs to halt.

\item[(3)] Cover \cite{cover} gave a construction of a $4$-state
\textit{time-dependent} (but $\varepsilon$-independent) classical finite
automaton that distinguishes $p=1/2$\ from $p=1/2+\varepsilon$, for any
$\varepsilon>0$, in the limit of infinitely many coin flips. \ This automaton
can even be made to halt in the case $p=1/2+\varepsilon$.

\item[(4)] It is easy to modify Cover's construction to get, for any
\textit{fixed} $\varepsilon>0$, a time-dependent, $2$-state finite automaton
that distinguishes $p=1/2$\ from $p=1/2+\varepsilon$ with high probability and
that halts. \ Indeed, we simply need to look for a run of $1/\varepsilon
$\ consecutive heads, repeating this $2^{1/\varepsilon}$\ times before
halting. \ If such a run is found, then we guess $p=1/2+\varepsilon$;
otherwise we guess $p=1/2$.

\item[(5)] If we merely want to distinguish $p=0$\ from $p=\varepsilon$, then
even simpler constructions suffice. \ With an $\varepsilon$-dependent finite
automaton, at every time step we flip the coin with probability $1-\varepsilon
$; otherwise we halt and guess $p=0$. \ If the coin ever lands heads, then we
halt and output $p=\varepsilon$. \ Indeed, even an $\varepsilon$%
\textit{-independent} finite automaton can distinguish $p=0$\ from
$p=\varepsilon$ in the $1$-sided model, by flipping the coin over and over,
and accepting if the coin ever lands heads.

\item[(6)] It is not hard to show that even a \textit{time-dependent, quantum}
finite automaton cannot solve the distinguishing problem, even for $p=0$
versus $p=\varepsilon$, provided that (i) the automaton has to halt when
outputting its answer, and (ii) the same automaton has to work for every
$\varepsilon$. \ The argument is simple: given a candidate automaton $M$, keep
decreasing $\varepsilon>0$\ until $M$ halts, with high probability, before
observing a single heads. \ This must be possible, since even if $p=0$\ (i.e.,
the coin \textit{never} lands heads), $M$ still needs to halt with high
probability. \ Thus, we can simply wait for $M$ to halt with high
probability---say, after $t$ coin flips---and then set $\varepsilon\ll1/t$.
\ Once we have done this, we have found a value of $\varepsilon$\ such that
$M$ cannot distinguish $p=0$\ from $p=\varepsilon$, since in both cases $M$
sees only tails with high probability.
\end{itemize}

\section{Open Problems\label{OPEN}}

\begin{enumerate}
\item[(1)] Our advice-coin computational model can be generalized
significantly, as follows. \ Let $\mathsf{BQPSPACE/dice}\left(  m,k\right)  $
be the class of languages decidable by a $\mathsf{BQPSPACE}$ machine that can
sample from $m$ distributions $\mathcal{D}_{1},\ldots,\mathcal{D}_{m}$, each
of which takes values in $\left\{  1,\ldots,k\right\}  $ (thus, these are
\textquotedblleft$k$-sided dice\textquotedblright). \ Note that
$\mathsf{BQPSPACE/coin}=\mathsf{BQPSPACE/dice}(1,2)$.

We conjecture that
\[
\mathsf{BQPSPACE/dice}\left(  1,\operatorname*{poly}\left(  n\right)  \right)
=\mathsf{BQPSPACE/dice}\left(  \operatorname*{poly}\left(  n\right)
,2\right)  =\mathsf{PSPACE/poly}.
\]
Furthermore, we are hopeful that the techniques of this paper can shed light
on this and similar questions.\footnote{Note that the distinguishing problem
for $k$-sided dice, for $k>2$, is addressed by the more general form of the
theorem of Hellman and Cover~\cite{hc}, while the distinguishing problem for
read-once branching programs was explored by Brody and Verbin~\cite{bv:coin}.}

\item[(2)] Not all combinations of model features in Section~\ref{MODELS} are
well-understood. \ In particular, can we distinguish a coin with
bias\ $p=1/2$\ from a coin with bias $p=1/2+\varepsilon$\ using a quantum
finite automaton, not dependent on $\varepsilon$,\ that only needs to succeed
in the limit?

\item[(3)] Given \textit{any} degree-$d$ rational function $a\left(  p\right)
$ such that $0\leq a\left(  p\right)  \leq1$\ for all $0\leq p\leq1$, does
there exist a $d$-state (or at least $\operatorname*{poly}\left(  d\right)
$-state) quantum finite automaton $M$\ such that $\Pr\left[  M\left(
\$_{p}\right)  \text{ accepts}\right]  =a\left(  p\right)  $?
\end{enumerate}

\section{Acknowledgments}

We thank Erik Demaine for suggesting the advice coins problem to us, and Piotr
Indyk for pointing us to the Hellman-Cover Theorem.

\bibliographystyle{plain}
\bibliography{thesis}

\end{document}